\newcommand{\gain}{\delta\optsub} %
\newcommand{\Sout}{\hat{S}} %
\newcommand{\exitvalue}{\chi\optpar} %
\newcommand{\incv}[3]{\nu\parof{#1,#2,#3}} %
\newcommand{\incvalue}{\incv} %
\newcommand{\schedulebefore}{S^-\optsub} %
\newcommand{\scheduleafter}{S^+\optsub} %
\newcommand{\setbefore}{\schedulebefore} 
\newcommand{\setafter}{\scheduleafter} %
\newcommand{\edge}{} %
\newcommand{\crs}{{\sc CR}\xspace}
\newcommand{\cI}{\mathcal{I}}
\providecommand{\Sout}{\smash{\hat{S}}}%
\providecommand{\Stack}{S}%
\providecommand{\Stackend}{S_{\text{end}}}
\providecommand{\afterneighbors}[1]{A_{#1}}
\providecommand{\beforeneighbors}[1]{B_{#1}}
\NewDocumentEnvironment{Inset}{}{
  \setlength{\topsep}{0pt}
  \fontsize{10pt}{12pt}\selectfont
  \renewcommand{\paragraph}[1]{

    \medskip\noindent\textbf{##1}}%
  \providecommand{\subparagraph}[1]{

    \medskip\noindent\textit{##1}}%
  \renewcommand{\subparagraph}[1]{

    \medskip\noindent\textit{##1}}%

  \noindent%
  \begin{oframed}%
  }{
  \end{oframed}%
  \medskip
}
\renewcommand{\paragraph}[1]{\medskip \noindent\textbf{#1}\xspace} %
\title{Independent Sets in Elimination Graphs \\ with a Submodular
  Objective}                    %
\author{ Chandra Chekuri\thanks{Dept.\ of Computer
    Science, University of Illinois at Urbana-Champaign. {\tt
      chekuri@illinois.edu}. Supported in part by NSF
    grants CCF-1910149 and CCF-1907937.}
  \and Kent
  Quanrud\thanks{Dept.\ of Computer Science, Purdue University, West Lafayette,
    IN. {\tt krq@purdue.edu}.  Supported in part by NSF grant CCF-2129816.}
}  %
\begin{document}

\maketitle

\begin{abstract}
  Maximum weight independent set (MWIS) admits a
  $\frac1k$-approximation in inductively $k$-independent graphs
  \cite{AkcogluADK02,YeB12} and a $\frac{1}{2k}$-approximation in
  $k$-perfectly orientable graphs \cite{KammerT14}.  These are a
  parameterized class of graphs that generalize $k$-degenerate graphs,
  chordal graphs, and intersection graphs of various geometric shapes
  such as intervals, pseudo-disks, and several others \cite{YeB12,
    KammerT14}.  We consider a generalization of MWIS to a submodular
  objective.  Given a graph $G=(V,E)$ and a non-negative submodular
  function $f: 2^V \rightarrow \mathbb{R}_+$, the goal is to
  approximately solve $\max_{S \in \mathcal{I}_G} f(S)$ where
  $\mathcal{I}_G$ is the set of independent sets of $G$.  We obtain an
  $\Omega(\frac1k)$-approximation for this problem in the two
  mentioned graph classes. The first approach is via the multilinear
  relaxation framework and a simple contention resolution scheme, and
  this results in a randomized algorithm with approximation ratio at
  least $\frac{1}{e(k+1)}$. This approach also yields parallel (or
  low-adaptivity) approximations.

  Motivated by the goal of designing efficient and deterministic
  algorithms, we describe two other algorithms for inductively
  $k$-independent graphs that are inspired by work on streaming
  algorithms: a preemptive greedy algorithm and a primal-dual
  algorithm. In addition to being simpler and faster, these
  algorithms, in the monotone submodular case, yield the first
  deterministic constant factor approximations for various special
  cases that have been previously considered such as intersection
  graphs of intervals, disks and pseudo-disks.
\end{abstract}

\section{Introduction}
Given a graph $G=(V,E)$ a set $S \subseteq V$ of vertices is an
independent set (also referred to as a stable set) if there is no edge
between any two vertices in $S$.  Let $\alpha(G)$ denote the
cardinality of a maximum independent set in $G$.  Finding $\alpha(G)$
is a classical problem with many applications; we refer to the search
problem of finding a maximum cardinality independent set as MIS. We
also consider the weighted version where the input consists of $G$ and
a vertex weight function $w: V \rightarrow \mathbb{Z}_+$ and the goal
is to find a maximum weight independent set; we refer to the weighted
problem as MWIS.  MIS is NP-Hard, and moreover it is also NP-Hard to
approximate $\alpha(G)$ to within a $\frac{1}{n^{1-\eps}}$-factor for
any fixed $\eps > 0$ \cite{Hastad99,Zuckerman06}. For this reason, MIS
and MWIS are studied in various special classes of graphs that capture
interesting problems while also being tractable. It is easy to see
that graphs with maximum degree $k$ admit a $\frac{1}{k}$-approximation.
In fact, the same approximation ratio holds for
$k$-degenerate graphs --- a graph $G=(V,E)$ is a $k$-degenerate if there is an
ordering of the vertices $\vertices = \setof{v_1,\dots,v_n}$ such that
for each $v_i$, $|N(v_i) \cap \setof{v_i,\dots,v_n}| \le k$. A canonical
example is the class of planar graphs which are $5$-degenerate.

In this paper we are interested in two parameterized classes of graphs
called inductively $k$-independent graphs \cite{YeB12} and
$k$-perfectly orientable graphs \cite{KammerT14}. These graphs are
motivated by the well-known class of chordal graphs, and capture
several other interesting classes such as intersection graphs of
intervals, disks (and hence planar graphs), low-treewidth graphs,
$t$-interval graphs, and many others.  A more recent example is the
intersection graph of a collection of pseudo-disks which were shown to
be inductively $156$-independent \cite{Pinchasi14}.  Graphs in these
classes can be dense and have large cliques. We formally define the
classes.

Given a graph $G=(V,E)$ and a vertex $v$ we let $N(v)$ denote the set
of neighbors of $v$ (excluding $v$).  A graph $\defgraph$ with $n$
vertices has a \emph{perfect elimination ordering} if there is an
ordering of vertices $\vertices = \setof{v_1,\dots,v_n}$ such that for
each $v_i$, $\alpha(G[N(v_i) \cap \setof{v_i,\dots,v_n}]) = 1$; in
other words $N(v_i) \cap \setof{v_i,\dots,v_n}$ is a clique. It is
well-known that these graphs are the same as chordal
graphs.\footnote{A graph is chordal iff there is no induced cycle of
  length more than $3$.} For example, the intersection graph of a
given set of intervals is chordal. One can generalize the perfect
elimination property ordering of chordal graphs.

\begin{definition}[\cite{KammerT14}]
  For a fixed integer $k \ge 1$, $\defgraph$ is \emph{$k$-simplicial} if
  there is an ordering of vertices $\vertices = \setof{v_1,\dots,v_n}$
  such that for each $v_i$, $G[N(v_i) \cap \setof{v_i,\dots,v_n}]$
  can be covered by $k$ cliques.
\end{definition}

Note that if $G[N(v_i) \cap \setof{v_i,\dots,v_n}]$ is covered by $k$
cliques then $\alpha(G[N(v_i) \cap \setof{v_i,\dots,v_n}]) \le
k$. Hence one can define a class based on this weaker property.

\begin{definition}[\cite{AkcogluADK02,YeB12}]
  For a fixed integer $k \ge 1$, $\defgraph$ is \emph{inductively
    $k$-independent} if there is an ordering of vertices $\vertices =
  \setof{v_1,\dots,v_n}$ such that for each $v_i$, $\alpha(G[N(v_i) \cap
  \setof{v_i,\dots,v_n}]) \le k$.  The \emph{inductive independence
    number} of $\graph$ is the minimum $k$ for which $\graph$ is
  inductively $k$-independent.
\end{definition}

Although inductively $k$-independent graphs generalize $k$-simplicial
graphs there is no known natural class of graphs that differentiates
the two; typically one establishes inductive $k$-independence via
$k$-simpliciality. The ordering-based definition can be further
relaxed based on orientations of $G$.

\begin{definition}[\cite{KammerT14}]
  For a fixed integer $k \ge 1$, $\defgraph$ is \emph{$k$-perfectly
    orientable} if there is an orientation $H=(V,A)$ of $\graph$ such
  that for each vertex $v \in \vertices$, $G[S_v]$ can be covered by
  $k$ cliques, where $S_v = N^+_H(v)$ is the out-neighborhood of $v$
  in $H$.
\end{definition}

\begin{remark}
  In this paper we will use the term $k$-perfectly orientable for the
  following class of graphs: there is an orientation $H=(V,A)$ of
  $\graph$ such that for each vertex $v \in \vertices$,
  $\alpha(G[S_v]) \le k$ where $S_v = N^+_H(v)$ is the
  out-neighborhood of $v$ in $H$.  This is more general than the
  preceding definition. We observe that the algorithm in
  \cite{KammerT14} for MWIS works also for this larger class, although
  there are no known natural examples that differentiate the two.
\end{remark}

We observe that if $G$ is inductively $k$-independent then it is also
$k$-perfectly orientable according to our relaxed definition. Indeed,
if $v_1,v_2,\ldots,v_n$ is an ordering that certifies inductive
$k$-independence we simply orient the edges of $G$ according to this
ordering which yields a DAG. The advantage of the $k$-perfect
orientability is that it allows arbitrary orientations. Note that a
cycle is $1$-perfectly orientable while it is $2$-inductively
independent. This factor of $2$ gap shows up in the known
approximation bounds for MWIS in these two classes of graphs. It is
known that for arbitrarily large $n$ there are $2$-perfectly
orientable graphs on $n$ vertices such that the graphs are not
inductively $\sqrt{n}$-independent \cite{BaryehudaHNSS06}.  These come
from the intersection graphs of so-called $2$-interval graphs.  Thus,
$k$-perfect orientability can add substantial modeling power.

Akcoglu et al.\ \cite{AkcogluADK02} described a
$\frac1k$-approximation for the MWIS problem in graphs that are
inductively $k$-independent.  They used the local-ratio technique, and
subsequently \cite{YeB12} derived it using a stack-based
algorithm. Both algorithms require as input an ordering of the
vertices that certifies the inductive $k$-independent property. For
$k$-perfectly orientable graphs \cite{KammerT14} described a
$\frac{1}{2k}$-approximation for the MWIS problem following the ideas
in \cite{BaryehudaHNSS06} for a special case.  Given a graph
$G=(V,E)$ and integer $k$ there is an $n^{O(k)}$-time algorithm to
check if $G$ is inductively $k$-independent \cite{YeB12}. Typically, the proof that a
specific class of graphs is inductively $k$-independent for some fixed
value of $k$, yields an efficient algorithm that also computes a
corresponding ordering. This is also true for $k$-perfect
orientability.  We refer the reader to \cite{HalldorssonT21} for
additional discussion on computational aspects of computing
orderings. In this paper we will assume that we are given both $G$ and
the ordering that certifies inductive $k$-independence, or an
orientation that certifies $k$-perfect orientability.

\subsection{Independent sets with a submodular objective}
We consider
an extension of MWIS to submodular objectives.  A real-valued set
function $f:2^V \rightarrow \mathbb{R}$ is modular iff
$f(A) + f(B) = f(A\cup B) + f(A \cap B)$ for all $A, B \subseteq V$.
It is easy to show that $f$ is modular iff there a weight function
$w: V \rightarrow \mathbb{R}$ where
$f(A) = w(A) = \sum_{v \in A} w(v)$.  A real-valued set function
$f:2^V \rightarrow \mathbb{R}$ is \emph{submodular} if
$f(A) + f(B) \ge f(A \cup B) + f(A \cap B)$ for all
$A, B \subseteq V$.  An equivalent definition is via decreasing
marginal value property: for any $A \subset B \subset V$ and
$v \in V - B$, $f(A+v) - f(A) \ge f(B+v) - f(B)$. Here $A+v$ is a
convenient notation for $A \cup \{v\}$. $f$ is monotone if
$f(A) \le f(B)$ for all $A \subseteq B$.  We will confine our
attention in this paper to non-negative submodular functions and we
will also assume that $f(\emptyset) = 0$.  Given a graph $G=(V,E)$ and
a non-negative submodular function $f:2^V \rightarrow \mathbb{R}_+$,
we consider the problem $\max_{S \subseteq \mathcal{I}_G} f(S)$ where
$\mathcal{I}_G$ is the collection of independent sets in $G$. This
problem generalizes MWIS since a modular function is also
submodular. We assume throughout that $f$ is available through a value
oracle that returns $f(S)$ on query $S$.  Our focus is on developing
approximation algorithms for this problem in the preceding graph classes,
since even very simple special cases are NP-Hard.

\medskip
\noindent
\emph{Motivation and related work:} Submodular function maximization
subject to various ``independence'' constraints has been a very active
area of research in the last two decades. There have been several
important theoretical developments, and a variety of applications
ranging from algorithmic game theory, machine learning and artificial
intelligence, data analysis, and network analysis; see
\cite{BuchbinderF-survey,Bilmes22,Clarketal16} for some pointers.  We
are motivated to consider this objective in inductive $k$-independent
graphs and $k$-perfectly orientable graphs for several reasons. First,
it is a natural generalization of MWIS. Second, various special cases
of this problem have been previously studied: Feldman
\cite{Feldman-thesis} considered the case of interval graphs, and Chan
and Har-Peled considered the case of intersection graphs of disks and
pseudo-disks \cite{ChanH12}. Third, previous algorithms have relied on
the multilinear relaxation based approach combined with contention
resolution schemes for rounding.  This is a computationally expensive
approach and also requires randomization.  The known approximation
algorithms for MWIS in inductive $k$-independent graphs are based on
simple combinatorial methods such as local-ratio, and this raises the
question of developing similar combinatorial algorithms for submodular
objectives. In particular, we are inspired by the connection to
\emph{preemptive} greedy algorithms for submodular function
maximization that have been developed in the context of streaming
algorithms \cite{ChakrabartiK15,bmkk-sso-14,ChekuriGQ15}. Although a
natural greedy algorithm has been extensively studied for submodular
function maximization \cite{nwf-mssf1-78,fnw-mssf2-78}, the utility of
the preemptive version for offline approximation has not been explored
as far as we are aware of. This is partly due to the fact that the
standard greedy algorithm works well for matroid like constraints.
More recently \cite{LevinW21} developed a primal-dual based algorithm
for submodular streaming under $b$-matching constraints which is
inspired by the stack based algorithm of \cite{PazS18} for the modular
setting; the latter has close connections to stack based algorithms
for inductive $k$-independent graphs \cite{YeB12}.  The algorithm in
\cite{LevinW21} was generalized to matroid intersection in
\cite{GargJS22}.  Finally, at a meta-level, we are also interested in
understanding the relationship in approximability between optimizing
with modular objectives and submodular objectives.  For many
``independence'' constraints the approximability of the problem with a
submodular objective is often within a constant factor of the
approximability with a modular objective, but there are also settings
in which the submodular objective is provably harder (see \cite{BruggmannZ19}).  A substantial
amount of research on submodular function optimization is for
constraints defined by \emph{exchange} systems such as (intersections
of) matroids and their generalizations such as $k$-exchange systems
\cite{FeldmanNSW11} and $k$-systems
\cite{Jenkyns76,CCPV11}. Independent sets in the graph classes we
consider provide a different parameterized family of constraints.

\subsection{Results}
We obtain an $\Omega(\frac1k)$-approximation for $\max_{S \subseteq
  \mathcal{I}} f(S)$ in inductively $k$-independent graphs and in
$k$-perfectly orientable graphs. We explore different techniques to
achieve these results since they have different algorithmic benefits.

First, we obtain a randomized algorithm via the multilinear relaxation framework
\cite{CVZ14-crs} by considering a natural polyhedral relaxation and
developing simple contention resolution schemes (CRS). The CRS schemes
are useful since one can combine the rounding with other side
constraints in various applications.

\begin{theorem}
  \label{thm:intro-crs}
  There is a randomized algorithm that given a $k$-perfectly orientable graph $G$
  (along with its orientation) and a monotone submodular function $f$, outputs
  an independent set $S'$ such that with high probability
  $f(S') \ge (\frac{1}{k+1}
  \cdot \frac{1}{(1+1/k)^k}) \max_{A \in \cI_G} f(A)$. For non-negative functions
  there is an algorithm that outputs an independent set $S'$ such that
  with high probability $f(S') \ge \frac{1}{e(k+1)} \max_{A \in \cI_G} f(A)$.
\end{theorem}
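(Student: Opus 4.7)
The plan is to apply the multilinear relaxation framework together with a simple contention resolution scheme (CRS) tailored to the orientation. Let $H$ be the given orientation so that $\alpha(G[S_v]) \le k$ for each $v$. I introduce the polytope
\[
P = \Bigl\{x \in [0,1]^V : \textstyle\sum_{u \in S_v} x_u \le k \text{ for all } v \in V\Bigr\},
\]
which contains $\mathbf{1}_I$ for every $I \in \cI_G$ since $|I \cap S_v| \le \alpha(G[S_v]) \le k$. I then approximately maximize the multilinear extension $F(x) = E[f(R(x))]$ over $P$ via continuous greedy in the monotone case and measured continuous greedy in the non-negative case to obtain $x^* \in P$.

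The core of the argument is a simple CRS exploiting the orientation. Scale $y = x^*/(k+1)$ so that $y_u \le 1/(k+1)$ and $\sum_{u \in S_v} y_u \le k/(k+1)$. Sample $R = R(y)$ by independent Bernoulli trials and output
\[
S' = \{v \in R : R \cap S_v = \emptyset\},
\]
i.e., sampled vertices with no sampled out-neighbor. Feasibility is immediate: for any edge $\{u, v\}$ with $v \in S_u$ in $H$, having both $u, v \in S'$ would force $v \in R \cap S_u$, contradicting the requirement $R \cap S_u = \emptyset$ for $u \in S'$. For the balancedness,
\[
\Pr[v \in S' \mid v \in R] = \prod_{u \in S_v}(1 - y_u) \ge \Bigl(\frac{k}{k+1}\Bigr)^k = \frac{1}{(1+1/k)^k},
\]
where the inequality holds because $\sum_u \log(1 - y_u)$ is concave in each $y_u$, so the minimum of the product under the constraints $y_u \le 1/(k+1)$ and $\sum_{u \in S_v} y_u \le k/(k+1)$ is attained at an extreme point with at most $k$ coordinates equal to $1/(k+1)$.

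Composing the CRS with the multilinear rounding in the framework of \cite{CVZ14-crs} yields $E[f(S')] \ge \frac{1}{(k+1)(1+1/k)^k}\max_{A \in \cI_G} f(A)$ in the monotone case; a high-probability upgrade follows from standard concentration of Lipschitz submodular functions on independent Bernoulli samples (e.g., via McDiarmid applied to $f$'s per-coordinate Lipschitz constant). The non-negative case yields $\frac{1}{e(k+1)}$ by combining measured continuous greedy with the same CRS. The main technical obstacle is the composition step: matching the stated constants requires lower-bounding $E[f(S')]$ against the integral optimum directly (rather than multiplicatively combining the CRS balancedness with a separate $(1-1/e)$-loss from continuous greedy), which is done via a coupling that relates $S'$ to a random subset of the optimal independent set and then invokes concavity of $F$ along non-negative rays.
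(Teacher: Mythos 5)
Your overall architecture (polyhedral relaxation from the orientation, (measured) continuous greedy for the multilinear relaxation, CRS that drops a sampled vertex if it has a sampled out-neighbor) matches the paper, and your observation that the product $\prod_{u\in S_v}(1-y_u)$ is minimized at a vertex of the box--knapsack polytope is a genuinely sharper way to bound the balancedness of the deterministic CRS than the paper's simple union-bound $\prod(1-y_u)\ge 1-\sum y_u$. However, there is a real gap: with the deterministic CRS alone, the stated constants are not achievable, and the fix you gesture at in your last paragraph does not work as described.

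The issue is the composition. In the CVZ framework the guarantee is $\evof{f(S')}\ge c\cdot F(y)$ where $y$ is the point you round, so you cannot solve the multilinear relaxation over $P$ and then scale down by $1/(k+1)$ for free --- you must run continuous greedy over $\frac{1}{k+1}P$, which yields $F(y)\ge\bigl(1-e^{-1/(k+1)}\bigr)\opt$, not $\frac{1}{k+1}\opt$. Since $1-e^{-t}<t$, your total is $\bigl(1-e^{-1/(k+1)}\bigr)\bigl(\frac{k}{k+1}\bigr)^k$, which is strictly below the stated $\frac{1}{k+1}\bigl(\frac{k}{k+1}\bigr)^k$. Worse, this cannot be repaired by choosing a different scaling: for any $b$, your sharper deterministic-CRS bound $(1-b/k)^k$ is still strictly below $e^{-b}$, so $\max_b (1-e^{-b/k})(1-b/k)^k$ is strictly below $\max_b (1-e^{-b/k})e^{-b}=\frac{1}{k+1}(1+1/k)^{-k}$ (e.g., at $k=1$ the deterministic CRS caps out near $0.20$ while the claim is $0.25$). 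The paper closes this gap by using a \emph{randomized} CRS with an extra subsampling step: from $R$ keep each $v$ independently with probability $(1-e^{-x_v})/x_v$, then discard vertices with a retained out-neighbor. This gives balancedness $e^{-b}$ rather than $\approx(1-b/k)^k$, and $b=k\ln(1+1/k)$ then makes the continuous-greedy factor exactly $\frac{1}{k+1}$. Your final paragraph acknowledges a ``composition step'' issue, but the proposed coupling to a random subset of the optimal independent set is not algorithmic (you don't know $T^*$) and would not recover the lost factor; what is actually needed is the subsampling CRS. The non-negative case has the same issue and is resolved in the paper the same way, together with measured continuous greedy at $b=k/(k+1)$.
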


The multilinear relaxation based approach yields parallel (or
low-adaptivity) algorithms with essentially similar approximation
ratios, following ideas in \cite{ChekuriQ19,EneNV19}.  Although the
multilinear approach is general and powerful, there are two drawbacks;
algorithmic complexity and randomization which are inherent to the
approach. An interesting question in the submodular maximization
literature is whether one can obtain deterministic algorithms via
alternate methods, or by derandomizing the multilinear relaxation
approach. There have been several results along these lines
\cite{BuchbinderF18,BuchbinderF19,HanCCW20}, and several open
problems.

Motivated by these considerations we develop simple and efficient
approximation algorithms for inductively $k$-independent
graphs. We show that a preemptive greedy algorithm, inspired by the
streaming algorithm in \cite{ChekuriGQ15}, yields a
deterministic $\Omega(\frac1k)$-approximation when $f$ is monotone.  This can be
combined with a simple randomized approach when $f$ is non-monotone.
Inspired by \cite{LevinW21}, we describe a primal-dual
algorithm that also yields a $\Omega(\frac1k)$-approximation; the primal-dual
approach yields better constants and we state the result below.

\begin{theorem}
  \label{thm:intro-primal-dual}
  There is a deterministic combinatorial algorithm that given an inductively
  $k$-independent graph $G$ (along with its orientation) and a
  monotone submodular function $f$, outputs an independent set $S'$
  such that
  $f(S') \ge \frac{1}{k+1+2 \sqrt{k}} \max_{A \in \cI_G} f(A)$. For
  non-negative functions there is a randomized algorithm that outputs
  an independent set $S'$ such that
  $\evof{f(S')} \ge \frac{1}{2k + 1 +\sqrt{8k}} \max_{A \in \cI_G}
  f(A)$. Both algorithms use $O(|V(G)|)$ value oracle calls to $f$ and
  in addition take linear time in the size of $G$.
\end{theorem}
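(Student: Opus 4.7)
The plan is to adapt the primal-dual streaming framework of Levin-Wajc \cite{LevinW21}, developed for $b$-matching with a submodular objective, to the inductively $k$-independent setting, echoing how the stack-based algorithm of Ye-Borodin \cite{YeB12} exploits the inductive ordering in the modular case. Concretely, I process vertices in the inductive ordering $v_1, \ldots, v_n$, maintaining a stack $S$ of pushed vertices and a nonnegative dual value $y_v$ for each pushed $v$. When $v_i$ arrives, I compute its marginal gain $\delta_i = f(S \cup \{v_i\}) - f(S)$ and the accumulated dual $Y_i = \sum_{u \in N(v_i) \cap S} y_u$; I push $v_i$ and assign it a slack dual $y_{v_i}$ iff $\delta_i$ exceeds $\rho \cdot Y_i$ for a threshold $\rho > 1$ to be tuned. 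After the pass, I pop the stack in LIFO order and greedily add each popped $v$ to the output $S'$ iff $v$ has no neighbor already in $S'$.

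The analysis would establish two linked inequalities in the dual values. A primal inequality of the form $f(S') \ge \gamma(\rho) \sum_v y_v$ follows because the $y_v$'s telescope into $f$-marginals along the stack and the LIFO unstacking loses only a controlled factor: each popped $v$ that is discarded is preempted by a neighbor $u \in S'$ whose dual $y_u$ already absorbs the conflict. A dual inequality $f(\mathrm{OPT}) \le \delta(\rho) \sum_v y_v$ follows from the decomposition $f(O) \le f(S') + \sum_{v \in O \setminus S'} f(v \mid S')$ for monotone $f$; the marginal of each unpushed $v \in O$ is bounded by $\rho \cdot Y_v$ using the failed push test at $v$'s arrival combined with submodularity to pass to $f(v \mid S')$. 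Crucially, each $y_u$ is charged to at most $k$ vertices of $O$ because the set $\{v \in O : u \in N(v) \cap S \text{ at } v\text{'s arrival}\}$ lies entirely in $u$'s forward neighborhood in the inductive ordering and is independent, hence has size at most $k$ by the inductive $k$-independence property. Balancing $\gamma(\rho)$ and $\delta(\rho)$ through a one-parameter optimization (with $\rho$ of order $1 + 1/\sqrt{k}$) yields the ratio $1/(\sqrt{k}+1)^2 = 1/(k+1+2\sqrt{k})$.

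For non-monotone $f$ I would randomize by independently retaining each pushed vertex with a suitable constant probability and applying the sampling lemma of Buchbinder, Feldman, Naor and Schwartz to control $\mathbb{E}[f(O \cup S')]$ in terms of $f(O)$. This costs roughly a factor of two in the dual inequality, and re-optimizing $\rho$ (now of order $1 + \sqrt{2/k}$) yields $1/(2k+1+\sqrt{8k}) = 1/(\sqrt{2k}+1)^2$. The runtime and oracle claims are immediate from the stack structure: each vertex triggers $O(1)$ oracle calls and touches only its current stack-neighbors, for a total of $O(|V|)$ queries and $O(|V|+|E|)$ work.

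The main obstacle is the interplay of submodularity with the primal-dual bookkeeping. Unlike the modular case, where each $y_v$ is tied to a static weight, here $\delta_i$ depends on the full stack and the processing history, so one must show (i) that $f(v \mid S')$ for $v \in O \setminus S'$ is still dominated by the arrival-time snapshot $\rho Y_v$ even though $S'$ is extracted later from a possibly larger stack, and (ii) that the telescoping of dual values into marginal gains survives the unstacking losses when two pushed vertices conflict. Both rest on the directional nature of submodularity (smaller sets yield larger marginals) combined with the LIFO order in which the stack is unwound; beyond this, tuning $\rho$ is a straightforward one-variable optimization, and the non-monotone case reduces to inserting the sampling step and recalculating the constants.
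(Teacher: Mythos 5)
Your proposal is essentially the paper's approach: process the vertices in the inductive $k$-independent order, push $v_i$ onto a stack when its marginal gain beats a $(1+\beta)$-scaled conflict charge, then unstack in LIFO order to build the output, and balance a primal lower bound against a dual/charging upper bound using the fact that each stacked vertex is charged by at most $k$ later vertices of any fixed independent set. The constants and the tuning of $\rho = 1+\beta = 1+1/\sqrt{k}$ (and the sampling step plus the Buchbinder--Feldman--Naor--Schwartz lemma for the non-monotone case, re-tuned to get $2k+1+\sqrt{8k}$) are exactly as in the paper.

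However, the step you flag as obstacle (i) is a genuine gap in the argument as written, and the fix requires routing the dual-side decomposition through the final \emph{stack} $\Stackend$, not the final output $\Sout$. Writing $f(O) \le f(\Sout) + \sum_{v \in O \setminus \Sout} f(v \mid \Sout)$ does not help: there is no containment between $\Sout$ and the stack contents $S^-_v$ at $v$'s arrival time, so submodularity cannot push $f(v \mid \Sout)$ back to the arrival-time marginal that the failed threshold test controls. The paper instead decomposes $f(O) \le f(\Stackend) + \sum_{v \in O \setminus \Stackend} f_{\Stackend}(v)$, and since the stack is monotonically growing we always have $S^-_v \subseteq \Stackend$, so submodularity gives $f_{\Stackend}(v) \le f_{S^-_v}(v) \le (1+\beta)\sum_{u \in C_v} w_u$, which is then counted $\le k$ times per $u$ by inductive $k$-independence. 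This introduces an extra term $f(\Stackend)$ that must separately be shown to be at most $\frac{1+\beta}{\beta}\sum w_i \le \frac{1+\beta}{\beta} f(\Sout)$, using the fact that the threshold test forces $w_i \ge \frac{\beta}{1+\beta}\,\nu(f,S^-_{v_i},v_i)$ for every pushed $v_i$. Once this detour through $\Stackend$ is in place, combining $\opt \le f(\Stackend) + k(1+\beta)\sum w_i$, $f(\Stackend) \le \frac{1+\beta}{\beta}\sum w_i$, and $f(\Sout) \ge \sum w_i$ gives $\opt \le (1+\beta)(1/\beta+k) f(\Sout)$, and $\beta = 1/\sqrt{k}$ yields the claimed ratio. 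Also note that the paper's threshold test compares $f_S(v_i)$ against $(1+\beta)\sum_{u\in C_i} w_u$ where $w_i$ is defined by the recursion $w_i = f_S(v_i) - \sum_{u\in C_i} w_u$; using the raw duals $y_u$ in place of $w_u$ without this subtraction would break the telescoping that underlies the lower bound $f(\Sout) \ge \sum_i w_i$.
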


\begin{remark}
  We obtain deterministic $1/4$-approximation for
  monotone submodular function maximization for independent sets in
  chordal graphs, and hence also for interval graphs. This matches the best ratio
  known via the multilinear relaxation approach
  \cite{Feldman-thesis}, and is the first deterministic algorithm as
  far as we know. Similarly, this is the first deterministic
  algorithm for disks and pseudo-disks that were previously handled via
  the multilinear relaxation approach \cite{ChanH12}. Are there deterministic
  algorithms for $k$-perfectly orientable graphs?  See \Cref{sec:concl}.
\end{remark}

\begin{remark}
  Matchings in a graph $G$, when viewed as independent sets in the
  line graph $H$ of $G$, form an inductively $2$-independent graph. In
  fact \emph{any} ordering of the edges of $G$ forms a valid
  $2$-inductive ordering of $H$. Thus our algorithm is also
  a semi-streaming algorithm. Our approximation bound for monotone
  functions matches the approximation achieved in \cite{LevinW21} for
  matchings although we use a different LP relaxation and view the
  problem from a more general viewpoint. However, for non-monotone
  functions, our ratio is slightly weaker, and highlights some
  differences.
\end{remark}

The primal-dual algorithm is a two-phase algorithm. The preemptive
greedy algorithm is a single phase algorithm. It gives slightly weaker
approximation bounds when compared to the primal-dual algorithm, but
has the advantage that it can be viewed as an \emph{online preemptive}
algorithm. Algorithms in such a model for submodular maximization were
developed in \cite{BuchbinderFS19,FeldmanKK18}.  Streaming
algorithms for submodular function maximization in
\cite{ChakrabartiK15,ChekuriGQ15} can be viewed as online preemptive
algorithms. Our work shows that there is an online preemptive
algorithm for independent sets of inductive $k$-independent graphs if
the vertices arrive in the proper order. There
are interesting examples where any ordering of the vertices is a valid
$k$-inductive ordering.

Our main contribution in this paper is conceptual. We study the
problem to unify and generalize existing results,  understand the
limits of existing techniques, and raise some directions for future
research (see \Cref{sec:concl}). As we mentioned, our
techniques are inspired by past and recent work on submodular function
maximization \cite{CVZ14-crs,Feldman-thesis,ChekuriGQ15,LevinW21}.

\paragraph{Organization:} \Cref{sec:prelim} sets up the relevant
technical background on submodular functions. \Cref{sec:crs} describes
the multilinear relaxation approach and proves \Cref{thm:intro-crs}.
\Cref{sec:primal-dual} describes the primal-dual approach and proves
Theorem~\ref{thm:intro-primal-dual}. \Cref{sec:greedy}
described the preemptive greedy algorithm. \Cref{sec:concl} concludes
with a discussion of a few future directions.

\section{Preliminaries}
\label{sec:prelim}
\paragraph{Submodular functions.}
Let $f: \subsetsof{\groundset} \to \nnreals$ be a nonnegative set
function defined over a ground set $\groundset$. ($\groundset$ may
have infinite cardinality, in which case it suffices that $f$ is
defined over the finite subsets of $\groundset$.) The function $f$ is
\emph{monotone} if $f(S) \leq f(T)$ for any nested sets
$S \subseteq T \subseteq \groundset$, and \emph{submodular} if it has
decreasing marginal returns: if $S \subseteq T \subseteq \groundset$
are two nested sets and $e \in \groundset$ is another element, then
\begin{math}
  f(S + e) - f(S) \geq f(T + e) - f(T).
\end{math}
For two sets $A,B \subseteq \groundset$, we denote the marginal value
of adding $B$ to $A$ by
\begin{math}
  f_A(B) \defeq f(A \cup B) - f(A).
\end{math}

\paragraph{Incremental values.}
In this paper, there is always an implicit ordering $<$ over the
ground set $\groundset$. For a set $S \subseteq \groundset$ and an
element $e \in \groundset$, the \emph{incremental value} of $e$ in
$S$, denoted $\incv{f}{S}{e}$, is defined as
\begin{align*}
  \incv{f}{S}{e}                %
  =                             %
  f_{S'}(e), \text{ where } S'  %
  =                             %
  \setof{s \in S : s < e}.
\end{align*}
Incremental value has some simple but very useful properties, proved
in \cite[Lemmas 1--3]{ChekuriGQ15} and summarized in the following.
\begin{lemma}
  Let $\groundset$ be an ordered set and
  $f: \subsetsof{\groundset} \to \reals$ a set function.
  \begin{mathproperties}
  \item For any set $S \subseteq \groundset$, we have
    \begin{math}
      f(S) = \sum_{e \in S} \incv{f}{S}{e}.
    \end{math}
  \item Let $S \subseteq T \subseteq \groundset$ be two nested subsets
    of $\groundset$ and $e \in \groundset$ an element. If $f$ is
    submodular, then
    \begin{math}
      \incv{f}{T}{e} \leq \incv{f}{S}{e}.
    \end{math}
  \item Let $S,Z \subseteq \groundset$ be two sets, and $e \in S$. If
    $f$ is submodular, then
    \begin{math}
      \incv{f_Z}{S}{e} \leq \incv{f}{Z \cup S}{e}.
    \end{math}
  \end{mathproperties}
\end{lemma}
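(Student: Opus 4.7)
The plan is to prove each of the three properties in turn by unpacking the definition $\incv{f}{S}{e} = f(S' \cup \{e\}) - f(S')$ with $S' = \{s \in S : s < e\}$, and in each case reducing to a single application of either telescoping or the decreasing-marginals form of submodularity.

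For property (i) I enumerate $S = \{e_1 < e_2 < \cdots < e_m\}$ and set $S_i = \{e_1, \dots, e_i\}$, so that $\incv{f}{S}{e_i} = f(S_i) - f(S_{i-1})$ directly from the definition. Summing over $i$ telescopes to $f(S_m) - f(S_0) = f(S)$, using the standing assumption $f(\emptyset) = 0$ noted earlier in the paper.

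For property (ii), I set $S' = \{s \in S : s < e\}$ and $T' = \{t \in T : t < e\}$; the hypothesis $S \subseteq T$ gives $S' \subseteq T'$, and neither prefix contains $e$. A single application of decreasing marginal returns at the element $e$ then yields $f(S' + e) - f(S') \geq f(T' + e) - f(T')$, which is exactly $\incv{f}{S}{e} \geq \incv{f}{T}{e}$.

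Property (iii) is the step I expect to require the most care, because two different prefix sets appear. Let $S' = \{s \in S : s < e\}$ and $T' = \{t \in Z \cup S : t < e\}$, so that $T' = \{z \in Z : z < e\} \cup S' \subseteq Z \cup S'$. Unfolding $f_Z(A) = f(A \cup Z) - f(Z)$ in the definition of the LHS gives
\[
  \incv{f_Z}{S}{e} = f_Z(S' + e) - f_Z(S') = f(Z \cup S' \cup \{e\}) - f(Z \cup S'),
\]
and directly $\incv{f}{Z \cup S}{e} = f(T' + e) - f(T')$. Since $T' \subseteq Z \cup S'$ and $e \notin T'$, decreasing marginals applied at $e$ yield $f(T' + e) - f(T') \geq f(Z \cup S' \cup \{e\}) - f(Z \cup S')$, which is the desired inequality. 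The only nontrivial bookkeeping is to pin down the two implicit prefix sets and then verify the direction of the submodularity step so that the larger prefix $Z \cup S'$ produces the smaller marginal; once those are aligned, the three items all fall out from one line of calculation each.
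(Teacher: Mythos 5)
The paper itself does not prove this lemma --- it is quoted from \cite[Lemmas 1--3]{ChekuriGQ15} --- so the comparison is with the standard argument, which your proof essentially reproduces. Properties (i) and (ii) are correct: the telescoping identity in (i) is exactly right (and you correctly invoke the standing assumption $f(\emptyset)=0$ that the paper fixes in the introduction), and in (ii) the single application of decreasing marginal returns to the nested prefixes $S' \subseteq T'$ gives the claim immediately.

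Property (iii) contains a small but genuine imprecision in the justification of the submodularity step. You correctly identify $S' = \{s \in S : s < e\}$ and $T' = \{t \in Z \cup S : t < e\} = \{z \in Z : z < e\} \cup S'$, and you correctly compute $\incv{f_Z}{S}{e} = f(Z \cup S' + e) - f(Z \cup S')$ and $\incv{f}{Z \cup S}{e} = f(T'+e) - f(T')$. To conclude from $T' \subseteq Z \cup S'$ via decreasing marginals, however, you must check that $e$ is absent from the \emph{larger} set $Z \cup S'$, not merely from $T'$; the condition ``$e \notin T'$'' that you cite is automatic (every element of $T'$ is $<e$) and is not what drives the inequality. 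Since $e \notin S'$ always holds, the needed condition reduces to $e \notin Z$. This is an implicit hypothesis of the lemma (it holds whenever $Z$ and $S$ are disjoint, as in every application in the paper), but if $e \in Z$ and $f$ is non-monotone the claimed inequality can fail: the left side collapses to $0$ while the right side $f(T'+e)-f(T')$ may be negative. Either state $e \notin Z$ (or $S \cap Z = \emptyset$) as part of the hypotheses, or replace the cited condition ``$e \notin T'$'' with ``$e \notin Z \cup S'$'' and say why it holds.
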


\paragraph{Multilinear Extension and Relaxation.}

\begin{definition}
  Given a set function $f: \groundsets \to \reals$, the
  \emph{multilinear extension} of $f$, denoted $F$, extends $f$
  to the product space $[0,1]^{\groundset}$ by interpreting each point
  $x \in [0,1]^{\groundset}$ as an independent sample
  $S \subseteq \groundset$ with sampling probabilities given by $x$,
  and taking the expectation of $f(S)$. Equivalently,
  $$F(x) = \sum_{S \subseteq \groundset} \parof{\prod_{i \in S}x_i
    \prod_{i \not \in S} (1-x_i)}.$$
\end{definition}

An \emph{independence family} $\mathcal{I}$ over a ground set
$\groundset$ is a subset of $2^\groundset$ that is downward closed,
that is, if $A \in \mathcal{I}$ and $B \subset A$ then
$B \in \mathcal{I}$.  A polyhedral/convex relaxation $P$ for a given
independence family $\mathcal{I}$ over $\groundset$ is a
polyhedra/convex subset of $[0,1]^\groundset$ such that for each
$A \in \mathcal{I}$, $\chi_A \in P$ where $\chi_A$ is the
characteristic vector of $A$ (a vector in $\{0,1\}^\groundset$ with a
$1$ in coordinate $i$ iff $i \in A$). We say that $P$ is a
\emph{solvable} relaxation for $\mathcal{I}$ if there is a polynomial
time algorithm to optimize a linear objective over $P$. Given a
ground set $\groundset$, and a non-negative submodular function $f$
over $\groundset$, and an independence family
$\mathcal{I} \subseteq 2^\groundset$,\footnote{We assume that an
  independence family is specified implicitly via an independence
  oracle that returns whether a given $A \subseteq \groundset$ belongs
  to $\mathcal{I}$.} we are interested in the problem
$\max_{S \in \mathcal{I}} f(S)$. For this general problem the
multilinear relaxation approach is to approximately solve the
multilinear relaxation $\max_{x \in P} F(x)$ followed by rounding ---
see \cite{CCPV11,CVZ14-crs,BuchbinderF-survey}.  For monotone $f$
there is a randomized $(1-1/e)$-approximation to the multilinear
relaxation when $P$ is solvable \cite{CCPV11}. For general
non-negative functions there is a $0.385$-approximation
\cite{BuchbinderF19}.

\paragraph{Concave closure and relaxation.}

\begin{definition}
  Given a set function $f: \groundsets \to \reals$, the
  \emph{concave closure} of $f$, denoted $f^+$, extends $f$
  to the product space $[0,1]^{\groundset}$ as follows.
  For $x \in [0,1]^{\groundset}$ we let
  \begin{align*}
    f^+(x) = \max\setof{ \sum_{S \subseteq \groundset} \alpha_S f(S):
    \sum_{S \ni i} \alpha_S = x_i \text{ for all } i \in \groundset \andcomma \sum_{S}
    \alpha_S = 1 \andcomma \alpha_S \ge
    0 \text{ for all } S \subseteq N}.
  \end{align*}
\end{definition}
As the name suggests, $f^+$ is a concave function over
$[0,1]^\groundset$ for any set function $f$. The definition of
$f^+(x)$ involves the solution of an exponential sized linear program.
The concave closure of a submodular set function is in general NP-Hard
to evaluate. Nevertheless, the concave closure is useful indirectly in
several ways. One can relate the concave closure to the multilinear
extension via the notion of correlation gap
\cite{AgrawalDSY10,AgrawalDSY12,ccpv-07,Vondrak-thesis,ChekuriL21}.
We can consider a
relaxation based on the concave closure for the problem of
$\max_{S \in \cI} f(S)$, namely, $\max_{x \in P} f^+(x)$ where $P$ is
a polyhedral or convex relaxation for the constraint set $\cI$.
Although we may not be able to solve this relaxation directly, it
provides an upper bound on the optimum solution and moreover, unlike
the multilinear relaxation, the relaxation can be rewritten as a large
linear program when $P$ is polyhedral.

\paragraph{Contention Resolution Schemes.} Contention resolution schemes
are a way to round fractional solutions for relaxations to packing problems
and they are a powerful and useful tool in submodular function maximization \cite{CVZ14-crs}.
For a polyhedral relaxation $P$ for $\cI$ and a real $b \in [0,1]$, $b P$
refers to the polyhedron $\{ b x\mid x \in P\}$.

\begin{definition}
  \label{defn:crscheme}
  Let $b,c \in [0,1]$. A $(b,c)$-balanced \crs scheme $\pi$
  for a polyhedral relaxation $P$ for $\cI$ is a procedure that for every $bx \in b P$
  and $A\subseteq  N$, returns a \emph{random} set
  $\pi_x(A)\subseteq A\cap \text{support}(\x)$ and satisfies the following properties:
  \begin{mathproperties}
  \item  $\pi_x(A)\in \mathcal{I}$ with probability $1$
    \;\;$\forall A\subseteq N, x\in b P$, and
  \item\label{item:crs} for all $i \in \text{support}(x)$,
    $\Pr[i\in \pi_x(R(x)) \mid i \in R(x)] \geq c$
    \;\;$\forall x \in b P$.
  \end{mathproperties}
  The scheme is said to be {\em monotone} if $\Pr[i \in
  \pi_x(A_1)]\geq \Pr[i\in \pi_x(A_2)]$ whenever $i \in A_1
  \subseteq A_2$.  A $(1,c)$-balanced \crs scheme is also called a
  \emph{$c$-balanced \crs scheme}. The scheme is \emph{deterministic} if
  $\pi$ is a deterministic algorithm (hence $\pi_x(A)$ is a single set
  instead of a distribution). It is \emph{oblivious} if $\pi$ is
  deterministic and $\pi_x(A) = \pi_y(A)$ for all $x,y$ and $A$,
  that is, the output is independent of $x$ and only depends on $A$.
  The scheme is \emph{efficiently implementable} if $\pi$ is
  a polynomial-time algorithm that given $x,A$ outputs $\pi_x(A)$.
\end{definition}

\section{Approximating via Contention Resolution Schemes}
\label{sec:crs}
Let $G=(V,E)$ be an inductively $k$-independent graph and let
$\vertices = \{v_1,v_2,\ldots,v_n\}$ be the corresponding order. Let
$\cI$ denote the set of independent sets of $G$.  We consider the
following simple polyhedral relaxation for $\cI$ where there is a
variable $x_i$ for each vertex $v_i$. For notational simplicity we let
$\afterneighbors{i}$ denote the set
$N(v_i) \cap \{v_{i+1},\ldots,v_n\}$ which is the set of neighbors of
$v_i$ that come after $v_i$ in the ordering.
\begin{align*}
  x_i + \sum_{v_j \in \afterneighbors{i}} x_j & \le  k \quad \text{for all } i \in [n]\\
  x_i & \in [0,1] \quad \text{for all }i \in [n]
\end{align*}
This is a valid polyhedral relaxation for $\cI$.  Indeed, consider an
independent set $S \subseteq V$, and let $x$ be the indicator vector
of $S$. Fix a vertex $v_i$ and consider the first inequality. If
$v_i \in S$, then since $A_i \subseteq N(v_i)$, we have
$A_i \cap S = \emptyset$, and the left hand side (LHS) is
$1$. Otherwise $\sum_{v_j \in A_i} x_j = \sizeof{A_i \cap S} \leq \alpha(A_i) \leq k$, so the
LHS is at most $k$.

In fact, the $\frac1k$-approximation for MWIS in
\cite{AkcogluADK02,YeB12} are implicitly based on this relaxation.
Moreover, the relaxation has a polynomial number of constraints and
hence is solvable. We refer to this relaxation as $Q_G$ and omit $G$
when clear from the context. The multilinear relaxation is to solve $\max_{x \in Q_G} F(x)$.

Now we consider the case when $G=(V,E)$ is a $k$-perfectly orientable graph.
Let $H=(V,A)$ be an orientation of $G$. For a given $v \in V$
we let $N^+_H(v) = \{u \in V \mid (v,u) \in A\}$ denote the out-neighbors
of $v$ in $H$. We can write a simple polyhedral relaxation for
independent sets in $G$ where we have a variable $x_v$ for each $v \in V$
as follows:
\begin{align*}
  x_v + \sum_{u \in N^+_H(v)} x_u & \le k \quad \text{for all } v \in
                                    V
  \\
  x_v & \in  [0,1] \quad \text{for all } v \in V
\end{align*}

To avoid notational overhead we will use $Q_G$ to refer to the preceding relaxation
for a $k$-perfectly orientable graph $G$. In \cite{KammerT14} a stronger relaxation than
the preceding relaxation is used to obtain a $\frac{1}{2k}$-approximation for
MWIS. It is not hard to see, however, that the proof in \cite{KammerT14}
can be applied to the simpler relaxation above.

We will only consider $k$-perfectly orientable graphs in the rest of this section since the
\crs scheme applies for this more general
class and we do not have a better scheme for inductively $k$-independent graphs.
We consider two simple \crs schemes for $Q$. The first is an oblivious
deterministic one. Given a set $R$ it outputs $S$ where $S = \{ v
\in R \mid  N^+_H(v) \cap R = \emptyset \}$.  In other words it discards
from $R$ any vertex $v$ which has an out-neighbor in $R$.
We claim that $S$ is an independent set. To see this suppose $uv \in E(G)$.
In $H$, $uv$ is oriented as $(u,v)$ or $(v,u)$. Thus, both $u$ and $v$ cannot be
in $S$ even if they are both are picked in $R$. It is also easy to see that the scheme
is monotone.

We now describe a randomized non-oblivious scheme which yields
slightly better constants and is essentially the same as the
one from \cite{Feldman-thesis} where interval graphs were considered
(a special case of $k=1$). This scheme works as follows.
Given $R$ and $x$ it creates a subsample $R' \subseteq R$ by
sampling each $v \in R$ independently with probability $(1-e^{-x_v})/x_v$
(Note that $1-e^{-y} \le y$ for all $y \in [0,1]$.).
Equivalently $R'$ is obtained from $x$ by sampling each $v$
with probability $1-e^{-x_v}$.
It then applies the preceding deterministic scheme to $R'$.
Note that this scheme is randomized and non-oblivious since
it uses $x$ in the sub-sampling step. It is also easy to see
that it is monotone.

We analyze the two schemes.
\begin{theorem}
  For each $b \in [0,1]$ there is a deterministic, oblivious, monotone
  $(b/k,1-b)$ \crs scheme for $Q$. There is a
  randomized monotone $(b/k, e^{-b})$ \crs scheme for $Q$.
\end{theorem}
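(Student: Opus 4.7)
The plan is to analyze both schemes using the single constraint $x_v + \sum_{u \in N^+_H(v)} x_u \le k$ that defines $Q$. Given an input $y \in (b/k) Q$, I write $y = (b/k) x$ with $x \in Q$, so that the random set $R = R(y)$ includes each vertex independently with probability $y_v = b x_v/k$. The task reduces to lower bounding $\Pr[v \in \pi_y(R) \mid v \in R]$ uniformly over $v$ in the support.

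For the deterministic oblivious scheme, $v$ is output iff $v \in R$ and no out-neighbor of $v$ lies in $R$. Because the coordinates of $R$ are sampled independently, conditioning on $v \in R$ leaves the law of the other coordinates unchanged, so the conditional selection probability equals $\prod_{u \in N^+_H(v)} (1 - y_u)$. A union bound gives $\Pr[N^+_H(v) \cap R \ne \emptyset] \le \sum_{u \in N^+_H(v)} y_u = (b/k)\sum_{u \in N^+_H(v)} x_u \le (b/k)\cdot k = b$, where the last step uses the $Q$ constraint. This yields the $1-b$ balance.

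For the randomized scheme, I decompose the event ``$v \in \pi_y(R)$'' into ``$v$ survives the sub-sampling'' and ``no out-neighbor survives the sub-sampling.'' The sub-sampling is defined so that unconditionally $\Pr[u \in R'] = 1 - e^{-y_u}$, hence $\Pr[v \in R' \mid v \in R] = (1 - e^{-y_v})/y_v$ and, by independence across coordinates, $\Pr[N^+_H(v) \cap R' = \emptyset] = e^{-\sum_{u \in N^+_H(v)} y_u}$. The $Q$ constraint now delivers $\sum_{u \in N^+_H(v)} y_u \le b - y_v$, so the product is at least $e^{-b} \cdot (e^{y_v} - 1)/y_v$. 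The elementary inequality $(e^t - 1)/t \ge 1$ on $(0,1]$ then yields the $e^{-b}$ bound.

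Monotonicity in both schemes follows by a simple coupling argument and is the easiest part. For the deterministic scheme, shrinking $A$ can only remove potential out-neighbors, so $v$ remains in the output whenever it was present for the larger set. For the randomized scheme, one couples the sub-samples of $A_1 \subseteq A_2$ so that the sub-sample of $A_1$ sits inside that of $A_2$, after which the same argument applies. I expect the only mildly delicate step to be tracking the two layers of randomness in the second scheme and checking that the clean $e^{-b}$ estimate drops out; beyond that, the proof is a direct combination of independence, a union bound (or the $1 - p \le e^{-p}$ estimate), and the defining inequality of $Q$.
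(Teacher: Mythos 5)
Your proposal is correct and follows essentially the same route as the paper: sample $R$ from the fractional point in $\frac{b}{k}Q$, condition on $v \in R$, use independence to compute the survival probability as a product over out-neighbors, and invoke the single packing constraint of $Q$ to bound $\sum_{u \in N^+_H(v)}$ (by $b$ in the deterministic case, by $b - y_v$ in the randomized case, exactly as in the paper). Your reparametrization $y = (b/k)x$ with $x \in Q$ is just a change of variables, your union-bound phrasing is equivalent to the paper's $\prod(1-x_u) \ge 1 - \sum x_u$, and the coupling argument you sketch for monotonicity (which the paper states without proof) is the right one.
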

\begin{proof}
  Let $x \in \frac{b}{k} Q$ and Let $R$ be a random set obtained by picking
  each $v \in V$ independently with probability $x_v$.
  We first analyze the deterministic \crs scheme.
  Fix a vertex $v \in \text{support}(x)$ and condition on
  $v \in R$. The vertex $v$ is included in the final output iff
  $N^+_H(v) \cap R = \emptyset$.
  Since $x \in \frac{b}{k}Q$ we have $\sum_{u \in N^+(v)} x_u \le b - x_v \le b$.
  \begin{align*}
    \Pr[v \in S \mid v \in R]   %
    & =  \Pr[N^+(v) \cap R = \emptyset] 
      = \prod_{u \in N^+(v)} (1-x_u) 
      \ge 1 - \sum_{u \in N^+(v)} x_u 
      \ge 1-b.
  \end{align*}
  This shows that the scheme is a $(b/k,1-b)$ \crs scheme.

  Now we analyze the randomized scheme which follows
  \cite{Feldman-thesis}.  Consider $v \in R(x)$. We see that $v \in S$
  conditioned on $v \in R$, if $v \in R'$ and
  $R' \cap N^+(v) = \emptyset$. Since the vertices are picked
  independently,
  \begin{align*}
    \Pr[v \in S \mid v \in R]   %
    & = \Pr[v \in R' \mid v \in R] \cdot \Pr[N^+(v) \cap R' = \emptyset] 
      = \frac{(1-e^{-x_v})}{x_v} \prod_{u \in N^+(v)} e^{-x_u} \\
    &
      \ge  \frac{(1-e^{-x_v})}{x_v} e^{-(b-x_v)} 
      \ge  \frac{(e^{x_v}-1)}{x_v} e^{-b} %
\ge e^{-b}.
  \end{align*}
  This finishes the proof.
\end{proof}

\medskip
One can apply the preceding \crs schemes for $Q_G$ along with the
known framework via the multilinear relaxation to approximate $\max_{S
  \in \cI} f(S)$. Let $\opt$ be the value of an optimum solution.  For
monotone functions the Continuous Greedy algorithm \cite{CCPV11} can be
used to find a point $x \in \frac{b}{k}Q$ such that $F(x) \ge
(1-e^{-b/k}) \opt$. When combined with the $(b/k,1-b)$ \crs scheme
this yields a $(1-e^{-b/k})(1-b)$-approximation.  The randomized \crs
scheme yields a $(1-e^{-b/k})e^{-b}$-approximation; this bound is
maximized when $b= k \ln (1+1/k)$ and the ratio is $\frac{1}{k+1}
\cdot \frac{1}{(1+1/k)^k} \ge \frac{1}{e (k+1)}$.  For non-negative
functions one can use Measured Continuous Greedy \cite{FeldmanNS11,Feldman-thesis} to obtain $x \in
\frac{b}{k}Q$ such that $F(x) \ge \frac{b}{k} e^{-b/k} \opt$.
Combined with the \crs scheme this yields a  $(\frac{b}{k} e^{-b(1+/k)})$-approximation.
Setting $b = k/(k+1)$ yields a $\frac{1}{e (k+1)}$-approximation.

\begin{theorem}
  There is a randomized algorithm that given a $k$-perfectly orientable graph $G$
  (along with its orientation) and a monotone submodular function $f$, outputs
  an independent set $S'$ such that with high probability
  $f(S') \ge (\frac{1}{k+1}
  \cdot \frac{1}{(1+1/k)^k}) \max_{A \in \cI} f(A)$. For non-negative functions
  there is an algorithm that outputs an independent set $S'$ such that
  with high probability $f(S') \ge \frac{1}{e(k+1)} \max_{A \in \cI} f(A)$.
\end{theorem}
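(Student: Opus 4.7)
The plan is to combine the randomized $(b/k, e^{-b})$ CRS scheme established in the preceding theorem with the standard multilinear relaxation framework for submodular maximization. Throughout, let $\opt := \max_{A \in \cI_G} f(A)$ and let $b \in [0,1]$ be a parameter to be tuned. Note that for any optimal independent set $A^*$, the scaled indicator $(b/k)\chi_{A^*}$ lies in $(b/k) Q_G$, so $\max_{x \in (b/k) Q_G} F(x) \ge (b/k) \opt$ in the non-monotone case and $\ge \opt$ along the continuous-greedy trajectory used for monotone $f$.

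For the monotone case I would first approximately solve $\max_{x \in (b/k) Q_G} F(x)$ using the Continuous Greedy algorithm of \cite{CCPV11}, producing a fractional point $x^* \in (b/k) Q_G$ with $F(x^*) \ge (1 - e^{-b/k}) \opt$. Then I would round $x^*$ using the randomized $(b/k, e^{-b})$ CRS scheme. Since that scheme is monotone, the standard CRS-to-expectation conversion of \cite{CVZ14-crs} yields $\evof{f(S')} \ge e^{-b} F(x^*) \ge e^{-b}(1 - e^{-b/k}) \opt$. A single-variable calculus check shows this is maximized at $b = k \ln(1 + 1/k)$, giving the claimed ratio $\tfrac{1}{k+1} \cdot \tfrac{1}{(1+1/k)^k}$, and using $(1+1/k)^k < e$ yields the $\tfrac{1}{e(k+1)}$ lower bound.

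For the non-negative case I would replace Continuous Greedy by Measured Continuous Greedy \cite{FeldmanNS11,Feldman-thesis}, which on a down-closed solvable polytope (easily verified for $Q_G$, since reducing any coordinate preserves the constraints) yields $x^* \in (b/k) Q_G$ with $F(x^*) \ge (b/k) e^{-b/k} \opt$. Composing with the monotone CRS scheme again gives $\evof{f(S')} \ge (b/k) e^{-b(1+1/k)} \opt$, and setting $b = k/(k+1)$ optimizes this expression to exactly $\tfrac{1}{e(k+1)} \opt$.

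Finally, to promote these expectation bounds to the high-probability guarantee claimed in the theorem, I would invoke a standard boosting argument: independently repeat the whole procedure $\Theta(\log n)$ times and return the best $S'$, using non-negativity of $f$ and a Markov-style argument on the gap from the expected value. The only genuinely nontrivial step is the monotone-CRS-to-expected-$f$ conversion, which I do not reprove but invoke as a black box from \cite{CVZ14-crs}; the rest is calculus for picking $b$ and checking that the polytope $(b/k) Q_G$ behaves well (solvability and down-closedness) under the two continuous-greedy variants.
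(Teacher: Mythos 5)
Your proposal is correct and takes essentially the same approach as the paper: solve the multilinear relaxation over $\frac{b}{k}Q_G$ via Continuous Greedy (resp.\ Measured Continuous Greedy) and round with the randomized $(b/k, e^{-b})$ CR scheme, then optimize over $b$. The paper does not spell out the high-probability boosting or the down-closedness check, which you supply; those are minor and standard additions, not a departure in method.
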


\paragraph{Efficiency and Parallelism.} Approximately solving the multilinear relaxation is
typically a bottleneck. \cite{ChekuriJV15} develops faster algorithms via
the multiplicative-weight update (MWU) based method. We
refer the reader to \cite{ChekuriJV15} for concrete running times that
one can obtain in terms of the number of oracle calls to $f$ or $F$. Once the
relaxation is solved, rounding via the \crs scheme above is simple and
efficient. Another aspect is the design of parallel algorithms, or
algorithms with low adaptivity --- we refer the reader to
\cite{BalkanskiS18} for the motivation and set up.
Via results in \cite{ChekuriQ19,EneNV19}, and the \crs scheme above,
we can obtain algorithms with adaptivity $O(\frac{\log^2 n}{\eps^2})$
while only losing a $(1-\eps)$-factor in the approximation compared
to the sequential approximation ratios.

\section{Primal-Dual Approach for Inductively $k$-Independent Graphs}
\label{sec:primal-dual}
We now consider a primal-dual algorithm. This is inspired by previous
algorithms for MWIS in inductively $k$-independent graphs, and
the work of Levin and Wajc \cite{LevinW21} who considered a primal-dual
based semi-streaming algorithm for submodular function maximization
under matching constraints.

The stack based algorithm in \cite{YeB12} for MWIS is essentially a
primal-dual algorithm. It is instructive to explicitly consider the LP
relaxation and the analysis for MWIS before seeing the algorithm and
analysis for the submodular setting. An interested reader can find
this exposition in \cref{sec:primal-dual-mwis}, which we provide for
the sake of pedagogy and completeness.

Following \cite{LevinW21} we consider an LP relaxation based on the
concave closure of $f$. For independent sets in an inductively
$k$-independent graph, we consider the relaxation
$\max_{x \in Q_G} f^+(x)$. We write this as an explicit LP and
describe its dual. See Fig~\ref{fig:primal-dual-submod}.  The primal has a variable $x_i$ for each
$v_i \in \vertices$ as we saw in the relaxation for MWIS. In
addition to these variables, we have variables
$\alpha_L, L \subseteq \vertices$ to model the objective $f^+(x)$. The
dual has three types of variables. $\mu$ is for the equality
constraint $\sum_L \alpha_L = 1$, $y_i$ is corresponds to the primal
packing constraint for $x_i$ coming from the independence constraint,
and $z_i$ is for the equality constraint coming from modeling $f^+(x)$.

\begin{figure}[thb]
  \centering
  \begin{minipage}{0.45\textwidth}
    \begin{eqnarray*}
      \max \sum_{S \subseteq V} \alpha_L f(L) &&\\
      \sum_{L \subseteq V} \alpha_L & = & 1 \\
      \sum_{L \ni v_i} \alpha_L & = & x_i \quad i \in [n]\\
      x_i + \sum_{v_j \in \afterneighbors{i}} x_j & \le & k \quad i \in [n]\\
      x_i & \ge & 0 \quad i \in [n]
    \end{eqnarray*}
  \end{minipage}
  \vrule
  \begin{minipage}{0.45\textwidth}
    \begin{eqnarray*}
      \min \mu + k \sum_{i=1}^n y_i && \\
      \mu + \sum_{v_i \in L} z_i & \ge & f(L) \quad L \subseteq V\\
      y_i + \sum_{v_j \in \beforeneighbors{i}} y_j & \ge & z_i \quad i \in [n]\\
      y_i & \ge & 0 \quad i \in [n]
    \end{eqnarray*}
  \end{minipage}
  \medskip
  \caption{Primal and Dual LPs via the concave closure relaxation for
    an inductively  $k$-independent graph $\defgraph$ with a given ordering
    $\{v_1,v_2,\ldots,v_n\}$.}
  \label{fig:primal-dual-submod}
\end{figure}

\subsection{Algorithm for monotone submodular functions}
We describe a deterministic primal-dual algorithm for the monotone
case. The algorithm and analysis are inspired by \cite{LevinW21} and we
note that the algorithm has some similarities to the preemptive greedy
algorithm. The primal-dual algorithm takes a two phase
approach similar to algorithm for the modular case.
In the first phase it processes the vertices in the given order and
creates a set $S \subseteq \vertices$. In the second phase it process
the vertices in the reverse order of insertion and creates a maximal
independent set.  Unlike the modular case, the decision to add a
vertex $v_i$ to $S$ in the first phase is based on an inflation factor
$(1+\beta)$. The formal
algorithm is described in Fig~\ref{fig:pd-monotone}. The algorithm creates
a feasible dual as it goes along --- the variables $y,z,\mu$
are from the dual LP. It also maintains and uses auxiliary weight variables
$w_i$, $1 \le i\le n$ that will be useful in the analysis.



\begin{figure}[htb]
  \begin{Inset}
    \begin{algorithm}{primal-dual-monotone-submod}{
        $f:\subsetsof{\vertices} \to \nnreals$,%
        $k \in \naturalnumbers$,%
        $\beta\in\preals$}
    \item Initialize an empty stack $S$. Let
      $\vertices = \setof{v_1,\dots,v_n}$ be a $k$-independence
      ordering of $\vertices$. Set $w,z,y \gets \zeroes_n$.
    \item For $i = 1,\dots,n$:
      \begin{steps}
      \item Let
        \begin{math}
          C_i %
          = %
          N(v_i) \cap S %
          = %
          \setof{u \in S \suchthat \edge{u}{v_i} \in \edges}
        \end{math}
      \item If ($f_S(v_i) > (1 + \beta) \sum_{v_j \in C_i} w_j$) then
        \begin{steps}
        \item Call $S\algo{.push($v_i$)}$ and set $x_i \gets 1$.
        \item Set $w_i \gets f_S(v_i) - \sum_{v_j \in C_i} w_j$ and
          $y_i \gets (1+\beta) w_i$.
        \end{steps}
      \item Otherwise set $z_i \gets f_S(v_i)$
      \end{steps}
    \item Let $\mu \gets f(S)$ and $\Sout \gets \emptyset$
    \item While $S$ is not empty:
      \begin{steps}
      \item $v \gets S\algo{.pop()}$
      \item If $\Sout + v_i$ is independent in $\graph$ then set
        $\Sout \gets \Sout + v_i$.
      \end{steps}
    \item Return $\Sout$
    \end{algorithm}
  \end{Inset}
  \caption{Primal-dual algorithm for monotone submodular
    maximization. The algorithm creates a feasible dual solution in
    the first phase along with a set $\Stackend$. In the second phase it
    processes $\Stackend$ in reverse order of insertion and creates a
    maximal independent set.}
  \label{fig:pd-monotone}
\end{figure}

Let $\Stackend$ be the set of vertices in the stack $S$ at the end of the first
phase. $S$ is a monotonically increasing set during the algorithm.
Note that $\mu = f(\Stackend)$ at the end of the algorithm.
We observe that for each $i$, the algorithm sets the variables $w_i,y_i,z_i$ exactly
once when $v_i$ is processed, and does not alter the values after they
are set.

\begin{lemma}
  The algorithm \algo{primal-dual-monotone-submod} creates a feasible dual solution $\mu,
  \bar{y},\bar{z}$ when $f$ is monotone.
\end{lemma}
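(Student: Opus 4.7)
The plan is to verify each of the three types of dual constraints in turn: non-negativity $y_i \ge 0$, the covering constraints $\mu + \sum_{v_i \in L} z_i \ge f(L)$ for every $L \subseteq V$, and the edge-type constraints $y_i + \sum_{v_j \in \beforeneighbors{i}} y_j \ge z_i$ for every $i$. The algorithm only ever assigns each $y_i, z_i, w_i$ once, so it suffices to check their final values.

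For non-negativity, I would first show by induction on $i$ that $w_i \ge 0$ for every vertex that was pushed. The base case is clear since $w_i$ is initialized to $0$. Inductively, when $v_i$ is pushed, the push condition says $f_S(v_i) > (1+\beta)\sum_{v_j \in C_i} w_j$, and all $v_j \in C_i$ were pushed earlier, so by induction each $w_j \ge 0$; hence $w_i = f_S(v_i) - \sum_{v_j \in C_i} w_j > 0$, giving $y_i = (1+\beta) w_i > 0$. For vertices that were not pushed, $y_i$ remains at its initial value $0$.

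The main obstacle is the covering constraint $\mu + \sum_{v_i \in L} z_i \ge f(L)$, since this must hold for every $L \subseteq V$ while $\mu = f(\Stackend)$ and the $z_i$'s are snapshot marginals taken at different times. The plan is to split $L$ as $L = (L \cap \Stackend) \cup L_{\text{out}}$ where $L_{\text{out}} = L \setminus \Stackend$. For $v_i \in L \cap \Stackend$ the vertex was pushed so $z_i = 0$; for $v_i \in L_{\text{out}}$ it was not pushed so $z_i = f_{S_i}(v_i)$, where $S_i$ is the stack when $v_i$ was processed. Since $S_i \subseteq \Stackend$, submodularity gives $f_{S_i}(v_i) \ge f_{\Stackend}(v_i)$. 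Summing these marginals and using submodularity again (the sum of single-element marginals w.r.t.\ the base $\Stackend$ upper-bounds the joint marginal gain in reverse), we get
\begin{align*}
\mu + \sum_{v_i \in L} z_i
= f(\Stackend) + \sum_{v_i \in L_{\text{out}}} f_{S_i}(v_i)
\ge f(\Stackend) + \sum_{v_i \in L_{\text{out}}} f_{\Stackend}(v_i)
\ge f(\Stackend \cup L_{\text{out}}) \ge f(L),
\end{align*}
where the last inequality uses monotonicity and $L \subseteq \Stackend \cup L_{\text{out}}$.

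Finally, for the edge constraint at index $i$: if $v_i$ was pushed then $z_i = 0$ and the constraint holds since every $y_j \ge 0$. If $v_i$ was not pushed, then the else branch gives $z_i = f_{S_i}(v_i) \le (1+\beta)\sum_{v_j \in C_i} w_j$; each $v_j \in C_i$ was necessarily pushed earlier, so $y_j = (1+\beta) w_j$, and moreover $C_i \subseteq \beforeneighbors{i}$ because $S_i \subseteq \{v_1,\dots,v_{i-1}\}$ and $C_i \subseteq N(v_i)$. Hence $\sum_{v_j \in \beforeneighbors{i}} y_j \ge \sum_{v_j \in C_i} y_j = (1+\beta)\sum_{v_j \in C_i} w_j \ge z_i$, and adding the nonnegative $y_i$ completes the argument.
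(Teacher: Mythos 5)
Your proposal is correct and follows essentially the same argument as the paper: the same decomposition of $L$ into $L \cap \Stackend$ and $L \setminus \Stackend$ with $z_i = 0$ on the stack, the same chain $\mu + \sum_{v_i \in L} z_i \ge f(\Stackend) + \sum_{v_i \in L \setminus \Stackend} f_{\Stackend}(v_i) \ge f(\Stackend \cup L) \ge f(L)$ via submodularity and monotonicity, and the same use of the rejection condition to verify the edge constraint. The only difference is that you spell out two small points the paper leaves implicit (the induction showing $w_i \ge 0$, hence $y_i \ge 0$, and the containment $C_i \subseteq \beforeneighbors{i}$), which is reasonable diligence but not a new idea.
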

\begin{proof}
  We observe that $z_i = 0$ if $v_i \in \Stackend$ and $z_i =
  \incvalue{f}{\setbefore{v_i}}{v_i}$ otherwise.  By submodularity it
  follows that if $v_i \not \in \Stackend$, $z_i \ge
  f_{\Stackend}(v_i)$ since $\setbefore{v_i} \subseteq \Stackend$.

  Consider the first set of constraints in the dual of the form
  $\mu + \sum_{v_i \in L} z_i \ge f(L)$ for $L
  \subseteq V$. We have
  $$\mu + \sum_{v_i \in L} z_i \ge f(\Stackend) + \sum_{v_i \in L \setminus
    \Stackend} f_{\Stackend}(v_i) \ge f(\Stackend \cup L) \ge f(L).$$
  We used submodularity in the second inequality and monotonicity of $f$ in the last inequality.

  Now consider the second set of constraints in the dual of the form
  $y_i + \sum_{v_j \in \beforeneighbors{i}} y_j \ge z_i$ for each $i$. If $v_i
  \in \Stackend$ then $z_i = 0$ and the constraint is trivially
  satisfied since the $y$ variables are non-negative. Assume $v_i \not
  \in \Stackend$. The algorithm did not add $v_i$ to $S$ because
  $$z_i = \incvalue{f}{\setbefore{v_i}}{v_i} \le
  (1+\beta)\sum_{v_j \in C_i} w_j = \sum_{v_j
    \in C_i} y_j$$
  which implies that the constraint for $v_i$ is satisfied.
\end{proof}

Feasibility of the dual solution implies an upper bound on the optimal
value.
\begin{corollary}
  \labelcorollary{opt-bound}
  \begin{math}                  %
    \opt \le f(\Stackend) + k (1+\beta)\sum_{i=1}^n w_i. %
  \end{math}
\end{corollary}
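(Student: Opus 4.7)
The plan is to apply weak LP duality to the concave closure relaxation from Figure~\ref{fig:primal-dual-submod} and then substitute the values of the dual variables produced by the algorithm. First, I would observe that the optimum of the submodular independent set problem lower-bounds the primal LP: given any independent set $A \in \cI_G$ realizing $\opt = f(A)$, setting $x_i = 1$ iff $v_i \in A$ together with $\alpha_A = 1$ (and all other $\alpha_L = 0$) gives a primal-feasible point of value $f(A) = \opt$, since the packing constraint $x_i + \sum_{v_j \in A_i} x_j \le k$ is implied by $\alpha(G[N(v_i) \cap \{v_{i+1},\dots,v_n\}]) \le k$ (the inductive $k$-independence property).

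Next, I would invoke the previous lemma, which established that the triple $(\mu,\bar y,\bar z)$ produced in the first phase is dual-feasible. Weak LP duality then yields
\begin{equation*}
  \opt \;\le\; \mu + k \sum_{i=1}^n y_i.
\end{equation*}
Now I would simply compute the right-hand side using the assignments made by \algo{primal-dual-monotone-submod}: at termination $\mu = f(\Stackend)$, and the variable $y_i$ is set to $(1+\beta) w_i$ precisely when $v_i$ is pushed onto $S$ (equivalently, when $v_i \in \Stackend$), while $y_i$ remains at its initial value $0$ otherwise. Since $w_i$ is likewise only nonzero when $v_i \in \Stackend$, we have $\sum_i y_i = (1+\beta)\sum_i w_i$, and substituting gives
\begin{equation*}
  \opt \;\le\; f(\Stackend) + k(1+\beta)\sum_{i=1}^n w_i,
\end{equation*}
as claimed.

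There is no real obstacle here; the corollary is just weak duality applied to the dual feasibility lemma, with a bookkeeping step to identify $\mu$ and $\sum y_i$ in terms of $f(\Stackend)$ and the weights $w_i$. The only point that deserves a sentence of justification is that the integer optimum is dominated by the concave-closure LP optimum, which is immediate from exhibiting the single-set convex combination above.
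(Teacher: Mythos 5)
Your proof is correct and follows exactly the route the paper intends: the corollary is weak LP duality applied to the dual-feasibility lemma, followed by the bookkeeping identifications $\mu = f(\Stackend)$ and $\sum_i y_i = (1+\beta)\sum_i w_i$ (using $w_i = y_i = 0$ for $v_i \notin \Stackend$). The paper also records, in a remark immediately after the corollary, a direct proof that avoids LP duality by charging $f(T) - f(\Stackend)$ against the exclusion inequalities, but your duality-based argument is the one the corollary's phrasing (``feasibility of the dual solution implies an upper bound'') is pointing at.
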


\begin{remark}
  An alternative proof of
  \refcorollary{opt-bound}, not directly based on LP duality or the
  concave closure, is as follows. Let $T$ be a fixed optimal solution.
  We have
  \begin{align*}
    \f{T} - \f{\Stackend}
    &\tago{\leq}
      \sum_{v_i \in T \setminus \Stackend} \f{v_i}_{\Stackend}
      \tago{\leq}
      \sum_{v_i \in T \setminus \Stackend} \incvalue{f}{\setbefore{v_i}}{v_i}
    \\
    &\tago{\leq} \parof{1 + \beta} \sum_{v_i \in T \setminus \Stackend} \sum_{j < i \andcomma v_i \in A_j}
      w_j                       %
      \tago{\leq} \parof{1 + \beta} k \sum_{v_j \in \Stackend} w_j
      = \sum_{i=1}^n w_i.
  \end{align*}
  as desired up to rearrangement of terms.  Here (\tagr*,\tagr*) are
  by submodularity. \tagr is by the inequality in the algorithm that
  excludes each $v_i$ from $\Stackend$. \tagr is because for each
  $v_j \in \Stackend$, $\sizeof{A_j \cap T} \leq k$.
\end{remark}

We now lower bound the value of $f(\Sout)$.\needspace{2em}
\begin{lemma}
  \labellemma{lower-bound-pd} $f(\Sout) \ge \sum_{i=1}^n w_i$.
\end{lemma}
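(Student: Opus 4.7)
The plan is to expand $f(\Sout)$ as a telescoping sum along the ordering, apply submodularity to replace the ``seen'' set with $\setbefore{v_{j_r}}$, and recognize the resulting marginal as exactly the recursive quantity defining $w_{j_r}$ in Phase~1.

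Concretely, list $\Sout = \{v_{j_1}, \dots, v_{j_s}\}$ with $j_1 < j_2 < \cdots < j_s$, and let $T_r = \{v_{j_1}, \dots, v_{j_{r-1}}\}$. Then
\[
f(\Sout) \;=\; \sum_{r=1}^{s} f_{T_r}(v_{j_r}).
\]
Since $\Sout \subseteq \Stackend$, every element of $T_r$ is in $\Stackend$ and has index below $j_r$, so $T_r \subseteq \setbefore{v_{j_r}}$. Submodularity then yields
\[
f_{T_r}(v_{j_r}) \;\geq\; f_{\setbefore{v_{j_r}}}(v_{j_r}) \;=\; w_{j_r} + \sum_{v_\ell \in C_{j_r}} w_\ell,
\]
where the equality is exactly the assignment made when $v_{j_r}$ was pushed in Phase~1.

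Summing over $r$ and swapping the order of the double sum, the coefficient on each $w_\ell$ becomes $|\{r : v_\ell \in C_{j_r}\}| = |\Sout \cap \afterneighbors{\ell}|$. For $v_\ell \in \Sout$ this count is zero by independence of $\Sout$; for $v_\ell \in \Stackend \setminus \Sout$ it is at least one, since in Phase~2 the vertex $v_\ell$ was popped and rejected only because some neighbor already in the current $\Sout$ blocked it. Any such blocker has larger index (hence lies in $\afterneighbors{\ell}$) and is never removed from $\Sout$, so it persists to the final output. Using $w_\ell = 0$ for $v_\ell \notin \Stackend$, we conclude
\[
f(\Sout) \;\geq\; \sum_{v_{j_r} \in \Sout} w_{j_r} \;+\; \sum_{v_\ell \in \Stackend \setminus \Sout} w_\ell \;=\; \sum_{i=1}^{n} w_i.
\]

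The step I expect to be most delicate is the double-sum swap, where one must confirm that each $w_\ell$ with $v_\ell \in \Stackend \setminus \Sout$ is accounted for at least once by a forward-neighbor relation preserved in the final $\Sout$. This reduces to the fact that Phase~2 only adds to $\Sout$ and never deletes, so the blocker witnessing a rejection survives to the output. I note that the argument uses only submodularity and the non-negativity $w_i \geq 0$; monotonicity of $f$ is not needed for this particular inequality.
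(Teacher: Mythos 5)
Your proof is correct and follows essentially the same route as the paper: expand $f(\Sout)$ as a telescoping sum of incremental values over $\Sout$, use submodularity (via $T_r \subseteq \setbefore{v_{j_r}}$) to lower-bound each term by $f_{\setbefore{v_{j_r}}}(v_{j_r}) = w_{j_r} + \sum_{v_\ell \in C_{j_r}} w_\ell$, and then argue that every $v_\ell \in \Stackend$ contributes its weight at least once --- either directly if $v_\ell \in \Sout$, or through the $C_{j_r}$ sum for the higher-indexed blocker that rejected $v_\ell$ in Phase~2. Your version is a bit more explicit about the double-sum coefficient $\lvert \Sout \cap \afterneighbors{\ell}\rvert$ and about the blocker persisting in $\Sout$, and your closing remark that monotonicity is not used is accurate (the paper reuses this lemma verbatim in the non-monotone analysis for the same reason).
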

\begin{proof}
  A vertex $v_i$ is added to $\Stackend$ since
  $\incvalue{f}{\setbefore{v_i}}{v_i} > (1+\beta) \sum_{v_j \in C_i}
  w_j$. Moreover, we have $w_i + \sum_{v_j \in C_i} w_j =
  \incvalue{f}{\setbefore{v_i}}{v_i}$ via the algorithm.
  Therefore,
  $$f(\Sout) = \sum_{v_i \in \Sout} \incvalue{f}{\Sout}{v_i} \ge
  \sum_{v_i \in \Sout}\incvalue{f}{\setbefore{v_i}}{v_i} = \sum_{v_i
    \in \Sout} (w_i + \sum_{j \in C_i} w_j).$$
  We see that for every $i'$ such that $v_{i'} \in \Stackend$ the term
  $w_{i'}$ appears at least once in $\sum_{v_i
    \in \Sout} (w_i + \sum_{j \in C_i} w_j)$; either $v_{i'} \in
  \Sout$ or if it is not then it was removed in the second
  phase since $v_{i'} \in C_i$ for some $v_i \in \Sout$. In the
  latter case $w_{i'}$ appears in the $\sum_{j \in C_i} w_j$.
  Thus $f(\Sout) \ge \sum_{i=1}^n w_i$ (recall that $w_i = 0$ if
  $v_i \not \in \Stackend$).
\end{proof}

We now upper bound $f(\Stackend)$ via the weights.
\begin{lemma}
  \labellemma{upper-bound-mu}
  \begin{math}
  f(\Stackend) \le \frac{1+\beta}{\beta} \sum_{i=1}^n w_i.
\end{math}
\end{lemma}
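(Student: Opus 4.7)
The plan is to unfold $f(\Stackend)$ as a telescoping sum of marginal gains along the push order, then use the acceptance condition to relate each marginal gain to the weights $w_i$.

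First, observe that the stack $S$ is monotonically increasing in the first phase, so if $v_i \in \Stackend$ then the snapshot of the stack at the moment $v_i$ is pushed is exactly $S_i \defeq \setbefore{v_i} \cap \Stackend$. Applying the first property in the incremental-value lemma (the telescoping identity $f(A) = \sum_{e \in A} \incv{f}{A}{e}$) to $A = \Stackend$, we get
\[
f(\Stackend) = \sum_{v_i \in \Stackend} f_{S_i}(v_i).
\]
Next, for each $v_i \in \Stackend$ the algorithm sets $w_i = f_{S_i}(v_i) - \sum_{v_j \in C_i} w_j$, where $C_i = N(v_i) \cap S_i$, so
\[
f_{S_i}(v_i) = w_i + \sum_{v_j \in C_i} w_j.
\]

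Now the key step: the algorithm only pushed $v_i$ because $f_{S_i}(v_i) > (1+\beta) \sum_{v_j \in C_i} w_j$, which rearranges to $\sum_{v_j \in C_i} w_j < w_i/\beta$. Plugging back in,
\[
f_{S_i}(v_i) < w_i + \frac{w_i}{\beta} = \frac{1+\beta}{\beta}\, w_i.
\]
Summing over $v_i \in \Stackend$ and recalling that $w_i = 0$ whenever $v_i \notin \Stackend$ gives
\[
f(\Stackend) \le \frac{1+\beta}{\beta} \sum_{i=1}^n w_i,
\]
as claimed.

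I do not expect any serious obstacle here; the main subtlety is just being careful that $S_i$ really equals $\setbefore{v_i} \cap \Stackend$ so that the telescoping identity applies. Everything else is a one-line rearrangement of the algorithm's acceptance rule.
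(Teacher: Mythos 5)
Your proof is correct and follows essentially the same route as the paper's: telescope $f(\Stackend)$ into incremental values $f_{S_i}(v_i)$, use the algorithm's identity $f_{S_i}(v_i) = w_i + \sum_{v_j \in C_i} w_j$ together with the acceptance test to bound each term by $\frac{1+\beta}{\beta}w_i$, and sum. The only cosmetic difference is that you invert the inequality (bounding $f_{S_i}(v_i)$ from above by $w_i$) where the paper bounds $w_i$ from below by $\frac{\beta}{1+\beta}f_{S_i}(v_i)$; these are the same rearrangement.
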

\begin{proof}
  Let $v_i \in \Stackend$. Recall that
  $\incvalue{f}{\setbefore{v_i}}{v_i} \geq (1+\beta)\sum_{j \in C_i} w_j$
  and
  $w_i = \incvalue{f}{\setbefore{v_i}}{v_i} - \sum_{j \in C_i} w_j$.
  This implies that
  \begin{math}
    w_i \ge \frac{\beta}{1+\beta} \incvalue{f}{\setbefore{v_i}}{v_i}.
  \end{math}
  Therefore
  \begin{align*}
    f(\Stackend) = \sum_{v_i\in \Stackend} \incvalue{f}{\Stackend}{v_i} =
    \sum_{v_i \in \Stackend} \incvalue{f}{\setbefore{v_i}}{v_i} \le
    \frac{1+\beta}{\beta} \sum_{v_i \in \Stackend} w_i,
  \end{align*}
  as desired.
\end{proof}


\begin{theorem}
  \begin{math}
    \opt \le (1+\beta)\parof{1/\beta + k} f(\Sout).
  \end{math}
  In particular, for $\beta = \frac{1}{\sqrt{k}}$,
  \begin{math}
    \opt \le (k+1+2\sqrt{k}) f(\Sout).
  \end{math}
\end{theorem}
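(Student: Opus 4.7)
The plan is to simply chain the three preceding facts already established for the algorithm --- the LP-duality upper bound on $\opt$, the lower bound on $f(\Sout)$, and the upper bound on $f(\Stackend)$ --- all expressed in terms of the auxiliary weights $\sum_i w_i$. Concretely, I would start from \refcorollary{opt-bound}, which gives
\begin{equation*}
  \opt \le f(\Stackend) + k(1+\beta)\sum_{i=1}^n w_i,
\end{equation*}
then apply \reflemma{upper-bound-mu} to replace $f(\Stackend)$ by $\frac{1+\beta}{\beta}\sum_i w_i$, obtaining
\begin{equation*}
  \opt \le \left(\frac{1+\beta}{\beta} + k(1+\beta)\right)\sum_{i=1}^n w_i = (1+\beta)\!\left(\frac{1}{\beta}+k\right)\sum_{i=1}^n w_i.
\end{equation*}
Finally I would invoke \reflemma{lower-bound-pd}, which says $\sum_i w_i \le f(\Sout)$, to conclude the first inequality.

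For the specific choice $\beta = 1/\sqrt{k}$, the constant factor becomes $(1+1/\sqrt{k})(\sqrt{k}+k)$. I would factor this as $(1+1/\sqrt{k}) \cdot \sqrt{k}(1+\sqrt{k}) = (\sqrt{k}+1)(1+\sqrt{k}) = (1+\sqrt{k})^2 = k+1+2\sqrt{k}$, matching the stated bound. The derivative of $(1+\beta)(1/\beta + k)$ with respect to $\beta$ vanishes at $\beta = 1/\sqrt{k}$, so this is also the optimal choice of the inflation parameter.

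There is no real obstacle here --- the work has already been done in the three preceding lemmas, and all that remains is the elementary algebraic combination and the optimization of $\beta$. If anything, the only thing to be careful about is that the three bounds are all phrased with the same $\sum_{i=1}^n w_i$ (not, say, a restricted sum), which is indeed the case since $w_i = 0$ for $v_i \notin \Stackend$ by construction of the algorithm.
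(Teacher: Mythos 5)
Your proof is correct and matches the paper's proof exactly: both chain \refcorollary{opt-bound}, \reflemma{upper-bound-mu}, and \reflemma{lower-bound-pd} in the same order to bound $\opt$ via $\sum_i w_i$, then optimize $\beta$. The optimality observation about the choice $\beta = 1/\sqrt{k}$ is a small bonus the paper does not spell out but is immediate.
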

\begin{proof}
  From \refcorollary{opt-bound} and  \reflemma{upper-bound-mu}
  and \reflemma{lower-bound-pd},
  \begin{align*}
    \opt & \le f(\Stackend) + k (1+\beta)\sum_{i=1}^n w_i %
           \le
           \frac{1+\beta}{\beta}\sum_{i=1}^n w_i +
           k(1+\beta)\sum_{i=1}^n w_i\\
         & \le (1+\beta)(\frac{1}{\beta} + k) \sum_{i=1}^n w_i %
           \le(1+\beta)(\frac{1}{\beta} + k) f(\Sout),
  \end{align*}
  as desired.
\end{proof}

\begin{remark}
  For $k=1$ we obtain a $1/4$-approximation which yields a
  deterministic $1/4$-approximation for chordal graphs and
  interval graphs. For $k=2$ we obtain a bound of $3+2\sqrt{2}$ which
  is the same as what \cite{LevinW21} obtain for matchings. Note that
  matchings can be interpreted, via the line graph, as inductive
  $2$-independent and in fact any ordering of the edges is an
  inductive $2$-independent order. This explains why the ordering does
  not matter. \cite{LevinW21} use a different LP relaxation
  for matchings, and hence it is a bit surprising that we obtain the same bound
  for all $2$-independent graphs. For the non-monotone case
  we obtain a weaker bound for $2$-independent graphs than what \cite{LevinW21} obtain for matchings.
\end{remark}

\subsection{Non-monotone submodular maximization}
We now consider the case of non-negative submodular function which may not
be necessarily monotone. This class of functions
requires some additional technical care and a key lemma that is useful
in handling non-monotone function is the following.

\begin{lemma}[\cite{bfns-smcc-14}]
  \labellemma{bfns-nonneg}
  Let $f:2^V \rightarrow \mathbb{R}_+$ be a non-negative submodular function.
  Fix a set $T \subseteq V$. Let $S$ be a random subset of $V$ such that
  for any $v \in V$ the probability of $v \in S$ is at most $p$ for some $p < 1$.
  Then $\evof{f(S \cup T)} \ge (1-p) f(T)$.
\end{lemma}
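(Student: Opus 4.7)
The plan is to reduce the claim to the special case $T = \emptyset$, which is the cited lemma of Buchbinder, Feldman, Naor, and Schwartz, and then invoke that result directly. The reduction is short: define $g : 2^{V \setminus T} \to \nnreals$ by $g(A) := f(A \cup T)$. Non-negativity and submodularity of $f$ pass to $g$, and $g(\emptyset) = f(T)$. Setting $S' := S \setminus T$, we have $f(S \cup T) = g(S')$ pointwise, and for every $u \in V \setminus T$, $\Pr[u \in S'] \le \Pr[u \in S] \le p$. So it suffices to show $\evof{g(S')} \ge (1-p) g(\emptyset)$, which is precisely the form of the cited lemma.

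For the reduced statement, the key idea I would follow from the BFNS argument is a pointwise telescoping: for any realized set $R \subseteq V \setminus T$ and any fixed ordering of its elements, $g(R) - g(\emptyset) = \sum_{v \in R} g_{R_{<v}}(v) \ge \sum_{v \in R} g_{(V\setminus T) \setminus \{v\}}(v)$, where the inequality uses submodularity and $R_{<v}$ denotes the predecessors of $v$ in $R$ under the chosen order. Taking expectations and combining the marginal bound $\Pr[v \in S'] \le p$ with the non-negativity of $g$ (which lets one absorb the contribution of potentially negative marginal terms) yields $\evof{g(S')} \ge (1-p) g(\emptyset)$, completing the proof once unwound through the reduction.

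The main obstacle is that $S$ is only assumed to satisfy the marginal bound $\Pr[v \in S] \le p$, with arbitrary and possibly adversarial dependencies across coordinates. This rules out tools that require product distributions, such as concavity of the multilinear extension of $g$ along the diagonal. The BFNS argument circumvents this by working with pointwise inequalities that hold uniformly over every realization, and invoking the distributional assumption only at the moment expectations are taken. For the write-up I would either reproduce this short direct argument or simply cite \cite{bfns-smcc-14}, since the reduction above is the only content specific to the stated form of the lemma.
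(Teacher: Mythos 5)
Your reduction to the case $T = \emptyset$ via $g(A) := f(A \cup T)$ on the ground set $N := V \setminus T$ is correct and standard; note also that the paper does not prove this lemma at all but simply quotes it from \cite{bfns-smcc-14}, so citing the source (as you also suggest) is exactly what the paper does, and the reduction is the only content one would need to add.

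The proof sketch you give for the reduced statement, however, has a genuine gap. Bounding $g_{R_{<v}}(v) \ge g_{N \setminus \{v\}}(v)$ for each $v$ independently and taking expectations gives
\[
\mathbb{E}\bigl[g(S')\bigr] \;\ge\; g(\emptyset) + \sum_{v \in N} \Pr[v \in S'] \, g_{N \setminus \{v\}}(v),
\]
and non-negativity of $g$ does not let you ``absorb'' the negative terms: the quantity $\sum_{v : g_{N \setminus \{v\}}(v) < 0} g_{N \setminus \{v\}}(v)$ can be far below $-g(\emptyset)$. Concretely, take $g(A) = 1$ for $A \neq N$ and $g(N) = 0$, which is non-negative and submodular. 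Then $g_{N\setminus\{v\}}(v) = -1$ for every $v$, so if $\Pr[v \in S'] = p$ for all $v$ your bound reads $\mathbb{E}[g(S')] \ge 1 - p|N|$, which is vacuous once $|N| > 1/p$, while the lemma promises $1-p$ (and indeed $\mathbb{E}[g(S')] = \Pr[S' \neq N] \ge 1-p$ holds). So the telescoping as you set it up loses too much.

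What the argument of \cite{bfns-smcc-14} actually does is different in two essential respects. First, one orders $N = \{u_1,\dots,u_n\}$ so that $q_1 \ge q_2 \ge \cdots \ge q_n$, where $q_i := \Pr[u_i \in S']$, and uses the nested chain $A_i := \{u_1,\dots,u_i\}$ rather than the sets $N\setminus\{v\}$: for every realization $R$ one has the pointwise bound $g(R) \ge g(\emptyset) + \sum_{i} \mathbf{1}[u_i \in R]\, g_{A_{i-1}}(u_i)$ by submodularity. Second, after taking expectations one performs a summation-by-parts step,
\[
\sum_{i=1}^n q_i \bigl(g(A_i) - g(A_{i-1})\bigr) \;=\; q_n\, g(A_n) - q_1\, g(\emptyset) + \sum_{i=1}^{n-1} (q_i - q_{i+1})\, g(A_i) \;\ge\; -q_1\, g(\emptyset) \;\ge\; -p\, g(\emptyset),
\]
where non-negativity of $g$ and the monotone ordering of the $q_i$ make every term except $-q_1 g(\emptyset)$ non-negative. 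This is exactly the Lov\'asz-extension structure, and the decreasing order on the probabilities is what makes the differences $q_i - q_{i+1}$ have the right sign. Without it (or some equivalent device) the step fails, as the example above shows. So either cite \cite{bfns-smcc-14} for the base case, as the paper does, or include the ordering-by-probability and Abel-summation step in the write-up.
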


We describe a randomized primal-dual algorithm which is adapted from the
one form \cite{LevinW21}. It differs from the monotone algorithm in one
simple but crucial way; even when a vertex $v$ has good value compared
to its conflict set it adds it to the stack only with probability $p$ which
is a parameter that is chosen later.

\begin{figure}[htb]
  \begin{Inset}
    \begin{algorithm}{primal-dual-nonneg-submod}{%
        $f:\subsetsof{\vertices} \to \nnreals$,%
        $k \in \naturalnumbers$,%
        $\beta\in\preals$}
    \item
      Initialize an empty stack $S$. Let $\vertices = \setof{v_1,\dots,v_n}$ be a
      $k$-independence ordering of $\vertices$. Let $w, y, z = \zeroes_n$.
    \item For $i = 1,\dots,n$:
      \begin{steps}
      \item Let
        \begin{math}
          C_i %
          = %
          N(v_i) \cap S %
          = %
          \setof{u \in S \suchthat \edge{u}{v_i} \in \edges}
        \end{math}
      \item If ($f_S(v_i) > (1 + \beta) \sum_{v_j \in C_i} w_j$),
        then with probability $p$:
        \begin{steps}
        \item Call $S\algo{.push($v_i$)}$ and $x_i \leftarrow 1$
        \item Set $w_i \gets f_S(v_i) - \sum_{v_j \in C_i} w_j$ and
          $y_i \gets (1+\beta) w_i$
        \end{steps}
      \item Otherwise set $z_i \gets f_S(v_i)$.
      \end{steps}
    \item Set $\mu \gets f(S)$ and $\Sout \gets \emptyset$
    \item While $S$ is not empty:
      \begin{steps}
      \item $v \gets S\algo{.pop()}$.
      \item If $\Sout + v_i$ is independent in $\graph$ then set
        $\Sout \gets \Sout + v_i$
      \end{steps}
    \item Return $\Sout$
    \end{algorithm}
  \end{Inset}
  \caption{Randomized primal-dual algorithm for non-negative submodular
    maximization.}
\end{figure}

\paragraph{Analysis:} As in the monotone case let $\Stackend$ be the set of vertices in the stack
at the end of the first phase (note that $\Stackend$ is now a random set).
The analysis of the randomized version of the algorithm is technically
more involved.  The sets $\Stackend, \Sout$ and the dual variables are now
random variables. Since very high-value vertices can be discarded
probabilistically, the dual values constructed by the algorithm may
not satisfy the dual constraints for each run of the algorithm.  Levin
and Wajc \cite{LevinW21} analyze their algorithm for matchings via an ``expected''
dual solution.  We do a more direct analysis via weak duality.

The following two lemmas are essentially the same as in the monotone
case and they relate the expected value of $\Sout$ and $\Stackend$ to
the dual weight values.

\begin{lemma}
  \labellemma{lower-bound-pd-nonneg}
  For each run of the algorithm: $f(\Sout) \ge \sum_{i=1}^n w_i$ and
  hence
  \begin{math}
  \evof{f(\Sout)} \ge \sum_{i=1}^n \evof{w_i}.
\end{math}
\end{lemma}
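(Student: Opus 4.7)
The plan is to observe that the monotone-case proof of \reflemma{lower-bound-pd} never actually invokes monotonicity of $f$, so its argument goes through verbatim for every realization of the randomized algorithm. Accordingly, I would fix an arbitrary outcome of the coin flips in \algo{primal-dual-nonneg-submod} and work with the resulting deterministic quantities $\Stackend$, $\Sout$, and $(w_i)_{i=1}^n$, proving the pointwise bound $f(\Sout) \ge \sum_i w_i$; the inequality in expectation then follows by linearity.

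The first step is algebraic. Whenever the algorithm pushes $v_i$ it sets $w_i = \incvalue{f}{\setbefore{v_i}}{v_i} - \sum_{v_j \in C_i} w_j$, i.e.\ $w_i + \sum_{v_j \in C_i} w_j = \incvalue{f}{\setbefore{v_i}}{v_i}$, while $w_i = 0$ for $v_i \notin \Stackend$. I would then combine the telescoping identity $f(\Sout) = \sum_{v_i \in \Sout} \incvalue{f}{\Sout}{v_i}$ with the submodularity bound $\incvalue{f}{\Sout}{v_i} \ge \incvalue{f}{\setbefore{v_i}}{v_i}$. The latter holds because the $\Sout$-prefix of $v_i$ sits inside the stack $\setbefore{v_i}$ present at the moment $v_i$ was pushed, since the phase-1 stack only grows and $\Sout \subseteq \Stackend$. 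These together yield
\[
f(\Sout) \;\ge\; \sum_{v_i \in \Sout}\Bigl(w_i + \sum_{v_j \in C_i} w_j\Bigr).
\]

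The combinatorial step is to charge each $w_{i'}$ with $v_{i'}\in\Stackend$ at least once against the right-hand side. If $v_{i'}\in\Sout$, its weight appears as the outer $w_i$ term at $i=i'$. Otherwise $v_{i'}$ was popped during phase~2 but rejected because some already-committed $v_i\in\Sout$ conflicts with it; since phase~2 processes the stack in reverse push order we must have $i>i'$, and because the phase-1 stack is monotonically growing, $v_{i'}$ was present in $\setbefore{v_i}$, hence $v_{i'}\in C_i$. So $w_{i'}$ appears inside the inner sum $\sum_{v_j \in C_i} w_j$ for that $i$. Summing yields $f(\Sout) \ge \sum_{i=1}^n w_i$ for this outcome, and taking expectations gives the second inequality.

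I do not anticipate a substantive obstacle: monotonicity was simply never used in the proof of \reflemma{lower-bound-pd}, and the randomization of phase~1 only changes \emph{which} vertices are pushed, not the structural properties of the phase-1/phase-2 interaction that underlie the charging argument. The only care is in spelling out that everything above is pointwise in the sample space (so the expectation step is trivial) and in the short verification, via the monotone-growth of the phase-1 stack, that a phase-2 conflict of $v_{i'}$ with some $v_i\in\Sout$ forces $v_{i'}\in C_i$.
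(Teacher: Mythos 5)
Your proposal is correct and matches the paper's (largely omitted) argument: the paper simply notes that this lemma is "essentially the same as in the monotone case," and you have correctly verified that the monotone proof of the companion lemma uses only submodularity plus the algorithmic identities (never monotonicity), that the pointwise charging argument is unaffected by the coin flips, and that linearity of expectation finishes.
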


\begin{lemma}
  \labellemma{upper-bound-mu-nonneg}
  For each run of the algorithm,
  $f(\Stackend) \le \frac{1+\beta}{\beta} \sum_{i=1}^n w_i$ and hence
  $$\evof{f(\Stackend)} \le \frac{1+\beta}{\beta} \sum_{i=1}^n\evof{w_i}.$$
\end{lemma}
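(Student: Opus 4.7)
\smallskip
\noindent\textbf{Proof plan for \reflemma{upper-bound-mu-nonneg}.}
The plan is to show that the pointwise inequality $f(\Stackend) \le \frac{1+\beta}{\beta}\sum_{i=1}^n w_i$ holds on \emph{every} sample path of the randomized algorithm, and then take expectations to conclude. Crucially, the randomness only affects whether a vertex passing the push test is actually pushed; once we condition on a realization of the algorithm's coin flips, the bookkeeping on the stack proceeds deterministically and exactly as in the monotone case. So the argument is essentially the same proof as for \reflemma{upper-bound-mu}, but repeated per-run.

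First, I would fix a run of the algorithm. For each $v_i \in \Stackend$, the algorithm pushed $v_i$ only after verifying the test $f_S(v_i) > (1+\beta)\sum_{v_j \in C_i} w_j$, where $S = \setbefore{v_i}$ at that moment and $C_i = N(v_i)\cap\setbefore{v_i}$. Since the algorithm then set $w_i = f_S(v_i) - \sum_{v_j\in C_i} w_j$, we immediately get
\begin{equation*}
  w_i \;>\; (1+\beta)\sum_{v_j \in C_i} w_j \;-\; \sum_{v_j \in C_i} w_j
  \;=\; \beta \sum_{v_j \in C_i} w_j,
\end{equation*}
which combined with $w_i + \sum_{v_j \in C_i} w_j = f_S(v_i) = \incv{f}{\setbefore{v_i}}{v_i}$ yields
$w_i \ge \tfrac{\beta}{1+\beta}\,\incv{f}{\setbefore{v_i}}{v_i}$.

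Next, I would apply the telescoping identity from the preliminaries: since the incremental value of $v_i$ inside $\Stackend$ is exactly $\incv{f}{\setbefore{v_i}}{v_i}$ (because the predecessors of $v_i$ in $\Stackend$ are precisely $\setbefore{v_i}$), we have
\begin{equation*}
  f(\Stackend) \;=\; \sum_{v_i \in \Stackend} \incv{f}{\Stackend}{v_i}
  \;=\; \sum_{v_i \in \Stackend} \incv{f}{\setbefore{v_i}}{v_i}
  \;\le\; \frac{1+\beta}{\beta} \sum_{v_i \in \Stackend} w_i
  \;=\; \frac{1+\beta}{\beta}\sum_{i=1}^n w_i,
\end{equation*}
where the last equality uses that $w_i = 0$ whenever $v_i \notin \Stackend$. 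Taking expectations on both sides and using linearity of expectation gives $\evof{f(\Stackend)} \le \frac{1+\beta}{\beta}\sum_{i=1}^n \evof{w_i}$, as claimed.

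The only thing to watch out for is that neither monotonicity nor any probabilistic trick (such as \reflemma{bfns-nonneg}) is needed here; nonnegativity and submodularity of $f$ suffice to make the telescoping identity and the decreasing-marginal inequality go through per realization. So there is no real obstacle, and the proof essentially transcribes the monotone argument with the observation that the pushed vertices and their recorded weights $w_i$ on any fixed run still satisfy the same local inequality as before.
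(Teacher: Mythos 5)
Your proposal is correct and matches the paper's approach: the paper states that this lemma is ``essentially the same as in the monotone case,'' and your proof indeed transcribes the monotone argument (recovering $w_i \ge \frac{\beta}{1+\beta}\incv{f}{\setbefore{v_i}}{v_i}$ from the push test, using that $\incv{f}{\Stackend}{v_i} = \incv{f}{\setbefore{v_i}}{v_i}$ because the stack only grows in phase one, and telescoping) pointwise on every sample path, then concludes by linearity of expectation. One small overstatement: this particular lemma does not actually need submodularity or nonnegativity of $f$ --- the telescoping identity holds for arbitrary set functions and step 3 is an exact equality --- but that does not affect correctness.
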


The next two lemmas provide a way to upper bound the optimum value via the
expected dual objective value.

\begin{lemma}
  \labellemma{expected-dual-contrib}
  For each vertex $v_i$, let $1_{v_i \notin \Stackend}$ indicate if $v_i$ is
  excluded from $\Stackend$. Let $\beforeneighbors{i}' = \beforeneighbors{i}+v_i$.   Then
  \begin{align*}
    \evof{\f{v_i}{S_i}1_{v_i \notin \Stackend}}
    \leq                        %
    \max{\frac{1-p}{p}, 1 + \beta} %
    \evof{w(\beforeneighbors{i}' \cap \Stackend)}.
  \end{align*}
\end{lemma}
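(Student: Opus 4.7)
My plan is to condition on the state of the algorithm immediately before $v_i$ is processed and split into two cases based on whether the greedy threshold
$f_{S_i}(v_i) > (1+\beta)\sum_{v_j \in C_i} w_j$ is satisfied. A key structural observation I will rely on is that vertices are never removed from the stack during phase~1, so $C_i = \beforeneighbors{i} \cap \Stackend$, and since $w_i$ is set to a nonzero value only when $v_i$ is actually pushed onto $S$, we have the identity
$$w(\beforeneighbors{i}' \cap \Stackend) = w(C_i) + w_i\, 1_{v_i \in \Stackend}.$$

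Let $\mathcal{H}_i$ denote the randomness through the processing of $v_{i-1}$; this fixes both $a := f_{S_i}(v_i)$ and $b := w(C_i)$. Write $M := \max\setof{(1-p)/p,\, 1+\beta}$. In the first case, where the threshold fails ($a \le (1+\beta) b$), the vertex $v_i$ is certainly excluded and $w_i = 0$, so the conditional LHS is $a \le (1+\beta) b \le M b$, which is at most the conditional RHS $M \cdot b$; this uses the $1+\beta$ branch of the $\max$. In the second case, where the threshold passes ($a > (1+\beta) b$), the independent coin flip of bias $p$ decides the fate of $v_i$: with probability $p$, $v_i$ enters $\Stackend$ and $w_i = a - b$; with probability $1-p$, $v_i$ does not enter $\Stackend$ and $w_i = 0$. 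Conditional on $\mathcal{H}_i$, the LHS equals $(1-p) a$ and the RHS equals $M \cdot ( p a + (1-p) b)$.

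The only nontrivial inequality to check is $(1-p) a \le M(p a + (1-p) b)$ in the second case, and this is precisely where the $(1-p)/p$ branch of the $\max$ is essential: since $M p \ge 1 - p$, the single term $M \cdot p a$ already dominates $(1-p) a$, so the inequality holds without needing the $(1-p) b$ term at all. Taking total expectation over $\mathcal{H}_i$ then combines the two cases to give the claim. I do not expect a serious obstacle beyond careful bookkeeping; the only subtle point in the plan is recognizing that $w_i$ is itself a random variable whose distribution is coupled to the same coin flip that determines $1_{v_i \in \Stackend}$, which is exactly why the identity above groups them together and makes the case analysis clean.
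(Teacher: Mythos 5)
Your proof is correct and follows essentially the same route as the paper's: the paper conditions on the event $E_i$ that the threshold passes (versus its complement) and uses the independence of the coin flip to produce the same two conditional bounds, then recombines by total expectation; your finer conditioning on the full history $\mathcal{H}_i$ reduces to exactly this case split and makes the same decisive observation that $w(\beforeneighbors{i}'\cap\Stackend)$ decomposes as $w(C_i) + w_i\,1_{v_i\in\Stackend}$ with the coin deciding which branch is realized. Your remark that $Mp\geq 1-p$ already kills the threshold-passing case without the $(1-p)b$ term is the same slack the paper exploits when it lower-bounds $\evof{w(\beforeneighbors{i}'\cap\Stackend)}{E_i}$ by $p\,\evof{f_{S_i}(v_i)}{E_i}$.
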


\begin{proof}
  Let $E_i$ be the event that
  \begin{math}
    \f{v_i}{\Stack_i} > \parof{1 + \beta} w(\beforeneighbors{i}
    \cap \Stack_i).
  \end{math}
  Condition on $\bar{E_i}$, that is $E_i$ \emph{not} occurring, in which case $v_i$ is not added to the stack.
  In this case we have
  \begin{align*}
    \evof{w(\beforeneighbors{i})}{\bar{E}_i}
    & =
      \evof{w(\beforeneighbors{i}' \cap \Stackend)}{\bar{E}_i}
      \geq
      \frac{1}{1 + \beta} \evof{\f{v_i}{\Stack_i}}{\bar{E}_i}
    \\
    &=
      \frac{1}{1 + \beta} \evof{\f{v_i}{\Stack_i} 1_{v_i \notin \Stackend}}{\bar{E}_i}
  \end{align*}
  On the other hand, condition on $E_i$, we have
  \begin{align*}
    \evof{w(\beforeneighbors{i}' \cap \Stackend)}{E_i} %
    \tago{\geq}                                       %
    p \evof{\f{v_i}{S_i}}{E_i}
    \tago{=}
    \frac{p}{1-p} \evof{\f{v_i}{S_i} 1_{v_i \notin \Stackend }}{E_i}.
  \end{align*}
  \tagr is because with probability $p$, we add $v$ to the stack, in
  which case $w(\beforeneighbors{i}' \cap \Stackend) \geq \f{v_i}{S_i}$.
  \tagr is because conditional on $E_i$ and $\f{v_i}{S_i}$,
  $v_i \notin \Stackend$ with probability $1-p$.  We combine the two
  bounds by taking conditional expectations, as follows:
  \begin{align*}
    \evof{\f{v_i}{S_i}1_{v_i \notin \Stackend}}
    &=                       %
      \evof{\f{v_i}{S_i}1_{v_i \notin \Stackend}}{E_i}\probof{E_i} +
      \evof{\f{v_i}{S_i}1_{v_i \notin \Stackend}}{\bar{E}_i} \probof{\bar{E_i}} \\
    &\leq                       %
      \frac{1-p}{p} \evof{w(\beforeneighbors{i}' \cap \Stackend)}{E_i}
      \probof{E_i} %
      +                         %
      \parof{1 + \beta}  \evof{w(\beforeneighbors{i}' \cap \Stackend)}{\bar{E}_i}
      \probof{\bar{E}_i}
    \\
    &\leq                       %
      \max{\frac{1-p}{p}, 1 + \beta}
      \parof{ \evof{w(\beforeneighbors{i}' \cap \Stackend)}
      \probof{E_i} %
      +    \evof{w(\beforeneighbors{i}' \cap \Stackend)}                     %
      \probof{\bar{E}_i} %
      }                                                      %
    \\
    &=
      \max{\frac{1-p}{p}, 1 + \beta} \evof{w(\beforeneighbors{i}' \cap \Stackend)},
  \end{align*}
  as desired.
\end{proof}

\begin{lemma}
  \labellemma{expected-dual-val}
  For any set $T$, $\evof{\f{\Stackend \cup T}} \leq \evof{\f{\Stack}} + k
    \max{\frac{1-p}{p}, 1 + \beta} \evof{w(\Stackend)}.$
\end{lemma}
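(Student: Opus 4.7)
The plan is to reduce the global bound to the per-vertex randomized estimate already established in \reflemma{expected-dual-contrib}, and then produce the factor of $k$ through a deterministic double counting that uses only the $k$-inductive ordering. As in the preceding analyses of the algorithm, I treat $T$ as an independent set of $G$ (the relevant case being $T$ an optimum), so that the $k$-inductive property applies.

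First, I would use submodularity to bound the excess as
$$f(\Stackend \cup T) - f(\Stackend) \leq \sum_{v_i \in T \setminus \Stackend} f_{\Stackend}(v_i).$$
Since the stack grows monotonically during the first phase, $S_i \subseteq \Stackend$ for every $i$, so for $v_i \notin \Stackend$ a second application of submodularity gives $f_{\Stackend}(v_i) \leq f_{S_i}(v_i)$. Extending the sum over all $v_i \in T$ with the indicator $1_{v_i \notin \Stackend}$, taking expectations, and applying \reflemma{expected-dual-contrib} term by term yields
$$\evof{f(\Stackend \cup T)} - \evof{f(\Stackend)} \leq \max\bigl\{\tfrac{1-p}{p},\,1+\beta\bigr\} \sum_{v_i \in T} \evof{w(\beforeneighbors{i}' \cap \Stackend)},$$
where $\beforeneighbors{i}' = \beforeneighbors{i} + v_i$.

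Finally, I would bound the remaining sum sample-pathwise by swapping the order of summation:
$$\sum_{v_i \in T} w(\beforeneighbors{i}' \cap \Stackend) = \sum_{v_j \in \Stackend} w_j \cdot \bigl|\{v_i \in T : v_j \in \beforeneighbors{i}'\}\bigr|.$$
For a fixed $v_j$, the set $\{v_i : v_j \in \beforeneighbors{i} + v_i\}$ equals $\afterneighbors{j} \cup \{v_j\}$. Independence of $T$, combined with the $k$-inductive property $\alpha(G[\afterneighbors{j}]) \leq k$, gives $|(\afterneighbors{j} \cup \{v_j\}) \cap T| \leq k$: if $v_j \in T$ the count is $1$ because no after-neighbor of $v_j$ lies in $T$, and otherwise it equals $|\afterneighbors{j} \cap T| \leq k$. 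Hence $\sum_{v_i \in T} w(\beforeneighbors{i}' \cap \Stackend) \leq k \, w(\Stackend)$ deterministically, and taking expectations closes the proof. The main step is \reflemma{expected-dual-contrib}, which absorbs the randomness in the coin flips controlling whether $v_i$ enters the stack; the surrounding argument is then a purely structural double counting, and the only subtlety is remembering that the stack never shrinks in the first phase, so that the inequality $f_{\Stackend}(v_i) \leq f_{S_i}(v_i)$ is valid via submodularity.
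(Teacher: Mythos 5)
Your proof is correct and follows essentially the same route as the paper: submodularity to pass to marginals against $\Stackend$, submodularity again to pass to marginals against $S_i$, then \reflemma{expected-dual-contrib} term by term, and finally the $k$-inductive bound. The only difference is cosmetic: the paper compresses the final step into the phrase ``by $k$-inductive independence,'' whereas you make the double-counting explicit by swapping the order of summation and correctly observe that the argument requires $T$ to be an independent set (which is what the lemma's applications use, even though its statement says ``any set $T$'').
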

\begin{proof}
  We have
  \begin{align*}
    \evof{\f{T \cup \Stackend} - \f{\Stackend}}
    &\tago{\leq}                        %
      \evof{\sum_{v_i \in T \setminus \Stackend} \f{v_i}{\Stackend}} %
      \tago{\leq}
      \evof{\sum_{v_i \in T \setminus \Stackend} \f{v_i}{\Stack_i}}
    \\
    &=                          %
      \sum_{v_i \in T} \evof{\f{v_i}{\Stack_i} 1_{v_i \notin \Stackend}}
      \tago{\leq}
      \max{\frac{1-p}{p}, 1 + \beta} \sum_{v_i \in T}
      \evof{w(\beforeneighbors{i}' \cap \Stackend)}    \\
    &\tago{\leq}
      \max{\frac{1-p}{p}, 1 + \beta} k \evof{w(\Stackend)},
  \end{align*}
  as desired up to rearrangement of terms.  Here (\tagr*,\tagr*) is by
  submodularity. \tagr is by the  \reflemma{expected-dual-contrib}. \tagr is by
  $k$-inductive independence.
\end{proof}

We now put the lemmas together to relate $\evof{f(\Sout)}$ to the optimum.

\begin{lemma}
  \label{lem:pd-nonnon-summary}
  Let $T^*$ be an optimum independent set with $\opt =
  f(T^*)$. Then
  \begin{align*}
    \opt &\leq                   %
           \frac{k\max{\frac{1-p}{p}, 1 + \beta} +
           \prac{1+\beta}{\beta}}{1-p} \evof{\f{\Sout}}.
  \end{align*}
\end{lemma}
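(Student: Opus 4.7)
The plan is to combine the non-monotone submodular lemma (\reflemma{bfns-nonneg}) with the three bounds already established for the randomized algorithm (\reflemma{lower-bound-pd-nonneg}, \reflemma{upper-bound-mu-nonneg}, and \reflemma{expected-dual-val}). The linchpin is a simple observation about the algorithm: each vertex $v_i$ is added to $\Stackend$ only inside the branch where the marginal-gain test passes and the independent coin flip comes up heads, so the marginal (unconditional) probability that $v_i \in \Stackend$ is at most $p$, irrespective of correlations with the inclusion events of other vertices. This is exactly the hypothesis needed to apply \reflemma{bfns-nonneg}.

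First I would apply \reflemma{bfns-nonneg} with $T = T^*$ (the optimum independent set) and the random set $\Stackend$ to obtain
\begin{align*}
  \evof{\f{\Stackend \cup T^*}} \;\ge\; (1-p)\,\f{T^*} \;=\; (1-p)\,\opt.
\end{align*}
Then I would upper-bound the same quantity by chaining the two ``dual'' inequalities. \reflemma{expected-dual-val} with $T = T^*$ gives
\begin{align*}
  \evof{\f{\Stackend \cup T^*}} \;\le\; \evof{\f{\Stackend}} + k \max\!\left\{\tfrac{1-p}{p},\, 1+\beta\right\}\evof{w(\Stackend)},
\end{align*}
and \reflemma{upper-bound-mu-nonneg} supplies $\evof{\f{\Stackend}} \le \tfrac{1+\beta}{\beta}\evof{w(\Stackend)}$. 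Substituting and combining with the first display yields
\begin{align*}
  (1-p)\,\opt \;\le\; \left(\tfrac{1+\beta}{\beta} + k \max\!\left\{\tfrac{1-p}{p},\, 1+\beta\right\}\right)\evof{w(\Stackend)}.
\end{align*}
Finally, \reflemma{lower-bound-pd-nonneg} gives $\evof{w(\Stackend)} \le \evof{\f{\Sout}}$, and dividing by $(1-p)$ produces the claimed bound.

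The only step that requires care is justifying the application of \reflemma{bfns-nonneg}: one must note that its hypothesis is purely a marginal bound, not any form of independence, so the fact that the acceptance test for $v_i$ depends on the random stack built from $v_1,\dots,v_{i-1}$ causes no difficulty. Once that is in place, the proof is entirely an algebraic composition of the three preceding expected-value lemmas, and no new submodular or graph-theoretic argument is needed.
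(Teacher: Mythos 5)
Your proof is correct and takes essentially the same approach as the paper: both apply \reflemma{bfns-nonneg} to get $(1-p)\opt \le \evof{f(T^* \cup \Stackend)}$, then chain \reflemma{expected-dual-val}, \reflemma{upper-bound-mu-nonneg}, and \reflemma{lower-bound-pd-nonneg} in that order and divide by $(1-p)$. Your added remark that \reflemma{bfns-nonneg} only requires a marginal probability bound (not independence) is a helpful clarification of a point the paper glosses with a one-line observation.
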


\begin{proof}
  Let $T$ be any independent set, in particular $T^*$. We observe that the algorithm ensures that
  for any vertex $v$, $\probof{v \in \Stackend} \le p$ and hence $\probof{v \in \Sout} \le p$.
  \begin{align*}
    \parof{1-p}\f{T}
    &\leq
      \evof{\f{T \cup \Stackend}} \quad \text{(\reflemma{bfns-nonneg})}\\
    &\leq \evof{\f{\Stack}} + k\max{\frac{1-p}{p}, 1 + \beta}
      \evof{w(\Stackend)} \quad \text{(\reflemma{expected-dual-val})}\\
    &\leq                         %
      \parof{k\max{\frac{1-p}{p}, 1 + \beta} +
      \prac{1+\beta}{\beta}} \evof{w(\Stackend)} \quad \text{(\reflemma{upper-bound-mu-nonneg})} \\
    &\leq
      \parof{k\max{\frac{1-p}{p}, 1 + \beta} +
      \prac{1+\beta}{\beta}} \evof{\f{\Sout}} \quad \text{(\reflemma{lower-bound-pd-nonneg})}.
  \end{align*}
\end{proof}

It remains to choose $p \in [0,1]$ and $\beta > 0$ to minimize the
RHS. Consider the term $\max{(1-p)/p, 1+\beta}$. If
$(1-p)/p \geq 1 + \beta$, then $p \geq 1/2$ (to force
$(1-p) / p \geq 1$), and the RHS is minimized by taking $\beta$ as
large as possible -- that is, such that $1 + \beta = (1-p)/p$. If
$(1-p)/p \leq 1 + \beta$, then the RHS is minimized by taking $p$ as
small as possible -- that is, such that $(1-p)/p = 1/p-1 =
1+\beta$. Thus $(1-p)/p = 1 + \beta$ at the optimum. In terms of just
$p$, then, we have
\begin{align*}
  \opt &\leq
         \prac{1}{1-p} \parof{\frac{k(1-p)}{p} + \frac{1-p}{1-2p}}
         \evof{\f{\Sout}}
         =                             %
         \parof{\frac{k}{p} + \frac{1}{1-2p}} \evof{\f{\Sout}}.
\end{align*}
(Here we note that $\beta = (1-2p) / p$, hence
$(1+\beta) / \beta = (1 - p) / (1-2p)$.)

In the special case of $k = 2$, as in matching, the RHS is
\begin{align*}
  \opt \leq
  \parof{\frac{2}{p} + \frac{1}{1 - 2p}} \evof{\f{\Sout}}
\end{align*}
The RHS is minimized
by $p = 1/3$, giving an approximation
factor of $9$.

For general $k$, the minimum is 
\begin{math}
  2 k + \sqrt{8k} + 1.
\end{math}

It is easy to see that the primal-dual algorithm makes $O(n)$
evaluation calls to $f$ and the overall running time is linear in the
size of the graph. The results for the monotone and non-negative
functions, together yield Theorem~\ref{thm:intro-primal-dual}.

\section{A Preemptive Greedy Algorithm}
\label{sec:greedy}
We now describe a preemptive greedy algorithm for maximizing a
monotone submodular function $f: 2^V \rightarrow \mathbb{R}_+$ over
independent sets of a inductively $k$-independent graph $\defgraph$
assuming that we are also given the ordering.  The algorithm is simple
and intuitive, and is inspired by algorithms developed in the streaming model.

The pseudocode for the algorithm is given in 
\reffigure{preemptive-greedy}, and is designed as follows.  Starting from
an empty solution $S = \emptyset$, \algo{preemptive-greedy} processes
the vertices in the given ordering one by one. When considering $v_i$,
the algorithm gathers the subset $C_i \subseteq S$ of all vertices in
the current set $S$ that are neighbors of $v_i$ (those that conflict
with $v_i$). The algorithm has to decide whether to reject $v_i$ or to
accept $v_i$ in which case it has to remove $C_i$ from $S$.  It
accepts $v_i$ if the marginal gain $f_S(v_i) \defeq f(S + v_i) - f(S)$
of adding $v_i$ directly to $S$ is at least $(1+\beta)$ times the
value $\sum_{u \in C_i}f_{S \setminus C_i}(u)$.  Here $\beta > 0$ is a
parameter that is fixed based on the analysis.  After processing all
vertices, we return the final set $S$.

\begin{figure}[htb]
  \begin{Inset}
    \begin{algorithm}{preemptive-greedy}%
      {$\defgraph$,%
        $f:\subsetsof{\vertices} \to \nnreals$,%
        $k \in \naturalnumbers$,%
        $\beta\in\preals$}

    \item Let $S = \emptyset$. Let $\vertices = \setof{v_1,\dots,v_n}$
      by a $k$-independence ordering of $\vertices$
    \item For $i = 1,\dots,n$:
      \begin{steps}
      \item Let
        \begin{math}
          C_i %
          = %
          N(v_i) \cap S %
          = %
          \setof{u \in S \suchthat \edge{u}{v_i} \in \edges}
        \end{math}
      \item If
        \begin{math}
          f_S(v_i) \geq (1 + \beta) \sum_{u \in C_i} \incv{f}{S}{u}
        \end{math}
        \begin{steps}
        \item \labelstep{preemptive-greedy-augment} Set
          \begin{math}
            S \gets (S \setminus C_i) + v_i
          \end{math}
        \end{steps}
      \end{steps}
    \item Return $S$
    \end{algorithm}
  \end{Inset}
  \caption{The algorithm \algo{preemptive-greedy} for
    finding an independent set in a inductively $k$-independent
    graph to maximize a monotone
    submodular objective function. \labelfigure{preemptive-greedy}}
\end{figure}

\algo{preemptive-greedy} for inductively $k$-independent graphs has
the following bounds. As the bounds are slightly weaker than the ones
given by the primal dual algorithms, the analysis is deferred to
\refsection{preemptive-greedy-analysis}.
\begin{theorem}
  Given an inductively $k$-independent graph with a $k$-inductive
  ordering, the algorithm \algo{preemptive-greedy} returns an
  independent set $\Sout$ such that for any independent set $T$,
  $f(T) \leq (k(1 + \beta) + 1)(1 + \beta^{-1}) f(\Sout).$
\end{theorem}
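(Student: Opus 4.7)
The plan is to combine a lower bound on $f(\Sout)$ derived from the algorithm's acceptance condition with an upper bound on $f(T)$ obtained by charging rejected or displaced vertices to nearby vertices in the algorithm's trajectory, and then to use the $k$-inductive independence to control multiplicities.

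For each vertex $v_i$ ever pushed onto the stack, let $S_i^-$ denote the state of $S$ just before $v_i$ is added, $C_i = N(v_i) \cap S_i^-$ the displaced set, $g_i = f_{S_i^-}(v_i)$ the marginal gain at insertion, and $W_i = \sum_{u \in C_i} \incv{f}{S_i^-}{u}$ the displaced contribution. The acceptance condition is then $g_i \ge (1+\beta) W_i$. Let $\Stackend$ denote the set of vertices ever pushed.

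First, I will show the lower bound $f(\Sout) \ge \tfrac{\beta}{1+\beta} \sum_{v_i \in \Stackend} g_i$. The crucial step is to argue that each swap at step $i$ increases $f(S)$ by at least $g_i - W_i$. Writing $f(S) = \sum_{u \in S} \incv{f}{S}{u}$ and using property~2 of the incremental-value lemma (incremental values grow when the set shrinks), the loss from dropping $C_i$ is at most $W_i$; inserting $v_i$ adds at least $g_i$ by submodularity. Combined with the acceptance condition this gives a per-swap gain of at least $\tfrac{\beta}{1+\beta} g_i$, and telescoping yields the lower bound.

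Next I bound $f(T)$. By monotonicity and submodularity, $f(T) \le f(T \cup \Sout) \le f(\Sout) + \sum_{v_i \in T \setminus \Sout} f_{\Sout}(v_i)$. For each $v_i \in T \setminus \Sout$, the vertex was either rejected (so $g_i = f_{S_i^-}(v_i) < (1+\beta) W_i$) or accepted and later displaced by some $v_j$ with $j > i$ (so $v_i \in C_j$ and $\incv{f}{S_j^-}{v_i} \le W_j \le \tfrac{1}{1+\beta} g_j$). In either case I will bound $f_{\Sout}(v_i)$ by $(1+\beta)$ times a sum of incremental values of the form $\incv{f}{S_{\cdot}^-}{v_j}$ contributed by neighbors $v_j \in \Stackend$ that precede $v_i$ in the ordering, via the incremental-value inequalities (in particular $\incv{f_Z}{S}{e} \le \incv{f}{Z \cup S}{e}$). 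Because $T$ is independent and $G$ is inductively $k$-independent, for each $v_j \in \Stackend$ the set of $v_i \in T$ that charge $v_j$ in this way lies in $N(v_j) \cap \setof{v_j,\dots,v_n}$ and is independent, hence has size at most $k$. Aggregating, $\sum_{v_i \in T \setminus \Sout} f_{\Sout}(v_i) \le k(1+\beta) \sum_{v_j \in \Stackend} g_j$. Substituting the Step~1 bound $\sum_j g_j \le \tfrac{1+\beta}{\beta} f(\Sout)$ gives $f(T) \le f(\Sout) + k(1+\beta) \cdot \tfrac{1+\beta}{\beta} f(\Sout) \le (k(1+\beta)+1)(1+\beta^{-1}) f(\Sout)$.

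The main obstacle will be the charging step: naively one would like $f_{\Sout}(v_i) \le f_{S_i^-}(v_i)$ to invoke the rejection condition, but this fails because $S_i^-$ may contain vertices later kicked out that are not in $\Sout$, so $S_i^- \not\subseteq \Sout$ in general, and the submodularity inequality goes the wrong way through $\Sout_{<i} \subseteq S_i^-$. Bridging the algorithm's intermediate state and the final output $\Sout$ cleanly is where the full strength of the incremental-value lemma is needed.
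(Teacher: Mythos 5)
Your high-level plan is the right one, and Step~1 (the lower bound $f(\Sout) \geq \tfrac{\beta}{1+\beta}\sum_i g_i$) is correct: the per-swap gain $\delta_i = f(\setafter{v_i}) - f(\setbefore{v_i})$ can indeed be shown, by submodularity, to be at least $g_i - W_i \geq \tfrac{\beta}{1+\beta} g_i$; this is essentially the paper's Lemma on $\delta_u$. The lemma you call ``$k$-inductive multiplicity'' is also correctly stated: for any $u$ ever in the stack, the set of $t \in T$ with $u \in C_t$ is an independent subset of $N(u)\cap\{u,\dots,v_n\}$, hence has size at most $k$.

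The genuine gap is exactly the one you flag at the end, and you do not close it. Starting from $f(T) \leq f(\Sout) + \sum_{v_i \in T\setminus\Sout} f_{\Sout}(v_i)$ and trying to bound each $f_{\Sout}(v_i)$ via the rejection inequality $f_{S_i^-}(v_i) < (1+\beta)\sum_{u\in C_i}\incv{f}{S_i^-}{u}$ founders on the fact that neither $S_i^- \subseteq \Sout$ nor $\Sout_{<i} \supseteq S_i^-$ holds; as you observe, $\Sout_{<i} \subseteq S_i^-$, so submodularity gives $f_{\Sout}(v_i) \leq f_{\Sout_{<i}}(v_i) \geq f_{S_i^-}(v_i)$, i.e.\ it points the wrong way. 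Invoking property~3 of the incremental-value lemma does not by itself repair this, because it still needs a containment between a state seen by the algorithm and the final output.

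The paper resolves this by introducing the larger intermediary set $U = \bigcup_u \scheduleafter{u}$, the set of \emph{all} vertices ever pushed (your $\Stackend$), and splitting as $f(T) \leq f(U) + f_U(T)$ rather than $f(T) \leq f(\Sout) + f_{\Sout}(T)$. The point is that $\setbefore{t} \subseteq U$ holds for every $t$, so submodularity does give $f_U(t) \leq f_{\setbefore{t}}(t)$, and the rejection inequality becomes usable. Each $\incv{f}{\setbefore{t}}{c}$ for $c\in C_t$ is then bounded by $\incv{f}{\Sout}{c}$ if $c$ survives (incremental values are nondecreasing) or by its ``exit value'' $\exitvalue{c}$ if it is later evicted; the extra bookkeeping lemma $\sum_{u\in U\setminus\Sout}\exitvalue{u} \leq \beta^{-1}f(\Sout)$ then controls the total, and the separate bound $f(U) \leq (1+\beta^{-1})f(\Sout)$ handles the $T\cap U$ part automatically. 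Without this detour through $U$ and the exit-value accounting, your charging step $\sum_{v_i\in T\setminus\Sout} f_{\Sout}(v_i) \leq k(1+\beta)\sum_j g_j$ remains unproved, so the proposal is incomplete as written.
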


\algo{preemptive-greedy} can be extended to nonnegative (and
non-monotone) submodular functions with a constant factor loss in
approximation by random sampling. As a preprocessing step, we let
$\vertices'$ randomly sample each vertex in $\vertices$ independently with
probability $1/2$. We then apply \algo{preemptive-greedy} to the
subgraph $\graph' = \graph[\vertices']$ induced by $\vertices'$.
It is easy to see that any subgraph of an inductively $k$-independent graph is also
inductively $k$-independent. The net effect of the random sampling is an
approximation factor for nonnegative submodular functions that is a
factor $4$ worse than for the monotone case. The modified algorithm,
called \algo{randomized-preemptive-greedy}, is given in
\reffigure{randomized-preemptive-greedy}.

\begin{figure}
  \begin{Inset}
    \begin{algorithm}{randomized-preemptive-greedy}%
      {$\defgraph$,%
        $f:\subsetsof{\vertices} \to \nnreals$,%
        $k \in \naturalnumbers$,%
        $\beta\in\preals$}
    \item Let $\vertices' \subseteq \vertices$ sample each $v \in
      \vertices$ independently with probability $1/2$
    \item Let $\graph' = \graph[\vertices']$ be the subgraph of $\graph$ induced by $\vertices'$
    \item Return
      \algo{preemptive-greedy(%
        $\graph'$,%
        $f:\subsetsof{\vertices'} \to
        \nnreals$,%
        $k \in \naturalnumbers$,%
        $\beta\in\preals$%
        )}.\
    \end{algorithm}
  \end{Inset}
  \caption{The algorithm \algo{randomized-preemptive-greedy} for
    finding an independent set in an inductively $k$-independent
    graph to maximize a nonnegative
    submodular objective function. \labelfigure{randomized-preemptive-greedy}}
\end{figure}

\begin{theorem}
  Given an inductively $k$-independent graph with a $k$-inductive
  ordering, the algorithm \algo{randomized-preemptive-greedy} returns
  an independent set $\Sout$ such that for any independent set $T$,
  $f(T) \leq 4(k(1 + \beta) + 1)(1 + \beta^{-1}) f(\Sout).$
\end{theorem}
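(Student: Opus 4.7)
The plan is to show $f(T^*) \leq 4C \cdot \mathbb{E}[f(\Sout)]$ with $C = (k(1+\beta)+1)(1+\beta^{-1})$, by combining a deterministic per-realization bound with two applications of \reflemma{bfns-nonneg}. Let $T^*$ be an optimum independent set of $G$, let $V'$ be the random subsample (each vertex independently with probability $1/2$), and let $\Sout$ denote the output of \algo{preemptive-greedy} on $G' = G[V']$ (which is what \algo{randomized-preemptive-greedy} returns). Any induced subgraph of an inductively $k$-independent graph is itself inductively $k$-independent under the restricted ordering, so the monotone analysis of \algo{preemptive-greedy} applies to $G'$ for every realization of $V'$.

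The first step is to extract from the (deferred) monotone analysis the submodular-only statement: for every independent set $T$ of $G'$, $f(T \cup \Sout) \leq C f(\Sout)$. The monotone proof ultimately chains
\begin{math}
  f(T) \leq f(T \cup \Sout) \leq f(\Sout) + \sum_{v \in T \setminus \Sout} f_{\Sout}(v) \leq C f(\Sout),
\end{math}
in which monotonicity is invoked only in the first inequality while the second uses submodularity and the third is the algorithm-specific core; removing the first inequality leaves the desired submodular-only bound. Applying this with $T = T^* \cap V'$ (which is independent in $G'$ as a subset of the independent set $T^*$) and taking expectations gives $\mathbb{E}\bigl[f((T^* \cap V') \cup \Sout)\bigr] \leq C \cdot \mathbb{E}[f(\Sout)]$.

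The second step lower bounds $\mathbb{E}\bigl[f((T^* \cap V') \cup \Sout)\bigr]$ by $\tfrac{1}{4} f(T^*)$ through two applications of \reflemma{bfns-nonneg}, each contributing a factor of $1/2$. For the outer factor, I condition on $Y = T^* \cap V'$: then $\Pr[v \in \Sout \setminus Y \mid Y] \leq 1/2$ for every $v \in V$ (the probability is $0$ when $v \in Y$ or $v \in T^* \setminus Y$, and at most $\Pr[v \in V' \mid Y] = 1/2$ for $v \notin T^*$ by independence of the sampling), so \reflemma{bfns-nonneg} applied conditionally with $T = Y$ and $S = \Sout \setminus Y$ gives $\mathbb{E}[f(\Sout \cup Y) \mid Y] \geq \tfrac{1}{2} f(Y)$; averaging over $Y$ yields $\mathbb{E}\bigl[f((T^* \cap V') \cup \Sout)\bigr] \geq \tfrac{1}{2}\, \mathbb{E}[f(T^* \cap V')]$. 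For the inner factor, define the complementary function $g : 2^{T^*} \to \mathbb{R}_+$ by $g(A) = f(T^* \setminus A)$, which is easily verified to be non-negative and submodular on ground set $T^*$; applying \reflemma{bfns-nonneg} to $g$ with $T = \emptyset$ and $S = T^* \setminus V'$ gives $\mathbb{E}[f(T^* \cap V')] = \mathbb{E}[g(T^* \setminus V')] \geq \tfrac{1}{2}\, g(\emptyset) = \tfrac{1}{2}\, f(T^*)$. Chaining the three inequalities yields $\tfrac{1}{4} f(T^*) \leq C \cdot \mathbb{E}[f(\Sout)]$, as claimed.

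The main obstacle I expect is the first step: one must carefully audit the deferred monotone analysis of \algo{preemptive-greedy} to verify that monotonicity enters only in the single transition $f(T) \leq f(T \cup \Sout)$, so that its removal delivers the stated submodular-only bound. The remainder is largely mechanical once one spots the complementary-function trick $g(A) = f(T^* \setminus A)$, which is what enables \reflemma{bfns-nonneg} to lower bound $\mathbb{E}[f(T^* \cap V')]$ directly (rather than the standard $\mathbb{E}[f(T^* \cup V')]$ that the lemma is usually applied to give).
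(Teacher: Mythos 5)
Your overall strategy mirrors the paper's: bound $f(T^*)$ by a per-realization inequality involving a set that contains $\Sout$, then recover a $\tfrac{1}{4}$-factor via two applications of \reflemma{bfns-nonneg}-style arguments. However, there is a genuine gap in your first step, arising from a mis-guess about which intermediate set the deferred monotone analysis passes through. The monotone proof does \emph{not} chain through $\Sout$; it chains through $U$, the set of \emph{all} vertices that are ever pushed onto the stack (including those later evicted). Concretely, the paper establishes $f_U(T) \leq k(1+\beta)(1+\beta^{-1})f(\Sout)$ and $f(U) \leq (1+\beta^{-1})f(\Sout)$, and only at the final step uses monotonicity in the form $f(T) \leq f(T\cup U) = f_U(T) + f(U)$. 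Removing monotonicity therefore yields the submodular-only bound $f(T\cup U) \leq Cf(\Sout)$, \emph{not} $f(T\cup\Sout) \leq Cf(\Sout)$ as you claim. These are genuinely different statements: since $f$ is non-monotone and $\Sout \subseteq U$, there is no general inequality between $f(T\cup\Sout)$ and $f(T\cup U)$, and nothing in the paper's lemmas bounds $\sum_{v\in T\setminus\Sout} f_{\Sout}(v)$ (indeed, by submodularity $f_{\Sout}(v) \geq f_U(v)$, which is the wrong direction). So the per-realization inequality you rely on is not established.

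The good news is that the fix is immediate and leaves the rest of your argument intact: replace $\Sout$ by $U$ in the target quantity, i.e.\ lower bound $\evof{f((T^*\cap V')\cup U)}$ rather than $\evof{f((T^*\cap V')\cup\Sout)}$. Your conditional application of \reflemma{bfns-nonneg} for the ``outer'' $\tfrac12$ works verbatim because $U\subseteq V'$ (every vertex is in $U$ with probability at most $\tfrac12$, and the same case analysis conditioned on $Y = T^*\cap V'$ applies). With that substitution, your proof coincides in structure with the paper's, with one pleasant difference: for the ``inner'' $\tfrac12$, the paper invokes concavity of the multilinear extension $F$ along nonnegative directions to get $\evof{f(T^*\cap V')} \geq \tfrac12 f(T^*)$, whereas you use the complementary function $g(A) = f(T^*\setminus A)$ on ground set $T^*$ and apply \reflemma{bfns-nonneg} directly. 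Both are valid; yours has the small virtue of avoiding the multilinear-extension machinery entirely and using only the one sampling lemma already at hand.
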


\begin{remark}
  The randomized strategy we outline is simple and oblivious. It
  loses a factor of $4$ over the monotone case. One could try to
  improve the approximation ratio by using randomization within the
  algorithm which would make the analysis more involved. However, we
  have not done this since the primal-dual algorithm yields better
  approximation bounds. This subsampling strategy is not new and has
  been used previously in \cite{FeldmanKK18}, and is also implicit in
  \cite{ChekuriGQ15}.
\end{remark}

\section{Concluding Remarks and Open Problems}
\label{sec:concl}
We described $\Omega(\frac{1}{k})$-approximation algorithms for
independent sets in two parameterized families of graphs that capture
several problems of interest. Although the multilinear relaxation
based framework yields such algorithms, the resulting algorithms are
computationally expensive and randomized.  We utilized ideas from
streaming and primal-dual based algorithms to give simple and fast
algorithms for inductively $k$-independent graphs with the additional
property that they are deterministic for monotone functions.  Our work
raises several interesting questions that we summarize below.

\begin{itemize}
\item The CR scheme that we described in Section~\ref{sec:crs} is
  unable to distinguish $k$-perfectly orientable graphs and inductive
  $k$-independent graphs. Is a better bound possible for inductively
  $k$-independent graphs?
\item Our combinatorial algorithms only apply to inductively
  $k$-independent graphs. Can we obtain combinatorial algorithms for
  $k$-perfectly orientable graphs?  Even for MIS the only approach
  appears to be via primal rounding of the LP solution
  \cite{KammerT14}.
\item Can we obtain deterministic $\Omega(\frac{1}{k})$-approximation
  algorithms for these graph classes when $f$ is non-negative?
  Interval graphs seem to be a natural first step to consider.
\item Are better approximation ratios achievable? For instance, can we
  obtain better than $1/4$-approximation for monotone submodular
  function maximization in interval graphs? Can we prove better lower
  bounds under complexity theory assumptions or in the oracle model
  for interval graphs or other concrete special cases of interest?
\item For both classes of graphs our algorithms are based on having an
  ordering that certifies that they belong to the class.  For MWIS in
  $k$-simplicial and $k$-perfectly orientable graphs,
  \cite{HalldorssonT21} describes algorithms based on the Lov\'asz
  number of a graph and the Lov\'asz $\theta$-function of a graph, and
  these algorithms do not require an ordering.  It may be feasible to
  extend their approach to the submodular setting via the multilinear
  relaxation. However, the resulting algorithms are computationally
  quite expensive. It would be interesting to obtain fast algorithms
  for these classes of graphs (or interesting special cases) when the
  ordering is not explicitly given.
\end{itemize}


\printbibliography

\appendix


\section{Interpreting the $k$-approximation for MWIS via
  primal-dual}\label{sec:primal-dual-mwis}
For the sake of completeness we show that the stack
based algorithm in \cite{YeB12} can be interpreted as a primal-dual
approximation algorithm via the LP relaxation $Q_G$ that we saw
previously. We state the primal and dual LPs below.  The primal is the MWIS
LP with $x_i$ denoting whether $v_i$ is chosen in the independent
set. The dual can be seen as a covering LP.  Recall that
$\afterneighbors{i}$ is the set of neighbors of $v_i$ that come after
it in the ordering. Similarly we let
$\beforeneighbors{i} = N(v_i) \cap \{v_1,\ldots,v_{i-1}\}$ denote the
set of neighbors of $v_i$ that come before $v_i$ in the ordering.

\begin{figure}[htb]
  \centering
  \begin{minipage}{0.45\textwidth}
    \begin{eqnarray*}
      \max \sum_{i=1}^n w_i x_i &&\\
      x_i + \sum_{v_j \in \afterneighbors{i}} x_j & \le & k \quad i \in [n]\\
      x_i & \ge & 0 \quad i \in [n]
    \end{eqnarray*}
  \end{minipage}
  \vrule
  \begin{minipage}{0.45\textwidth}
    \begin{eqnarray*}
      \min k \sum_{i=1}^n y_i && \\
      y_i + \sum_{v_j \in \beforeneighbors{i}} y_j & \ge & w_i \quad i \in [n]\\
      y_i & \ge & 0 \quad i \in [n]
    \end{eqnarray*}
  \end{minipage}

  \medskip
  \caption{Primal and Dual LPs for MWIS in an inductively
    $k$-independent graph $\defgraph$ with a given ordering
    $\{v_1,v_2,\ldots,v_n\}$.}
  \label{fig:primal-dual-mwis}
\end{figure}

\begin{remark}
  We observe that the primal LP relaxation does not enforce the
  condition that $x_i \le 1$. Thus the relaxation allows up to $k$ copies
  of a vertex to be chosen. The primal-dual algorithm chooses at most one copy of a
  vertex. The analysis shows that the integrality gap is at most $1/k$
  even with the relaxation. The advantage of dropping the $x_i \le 1$
  constraints is a simpler dual.
\end{remark}

The primal-dual algorithm is described in
Fig~\ref{fig:primaldual-mwis}. It has two phases, a growing phase in
which a set $S \subseteq \vertices$ is created.  This is guided by a
dual solution $y$ which processes vertices in the inductive
$k$-independent order.  One can see this set $S$ as the stack produced
in the algorithm in \cite{YeB12}. To be consistent we use a stack for
$S$. In the second phase the vertices in $S$ are processed in the
reverse order to create a maximal independent set $\Sout$.


\begin{figure}
  \begin{Inset}
    \begin{algorithm}{primal-dual-mwis}
      {$\defgraph$,
        $k \in \naturalnumbers$}
    \item Initialize and empty stack $S \leftarrow
      \emptyset$. Initialize primal and dual solutions  $x \gets
      \zeroes$ and $y \gets \zeroes$.
    \item for $i = 1,\ldots,n$:
      \begin{steps}
      \item Let $y_i = \max\{0, w_i - \sum_{j < i, v_i \in A_j} y_j\}$
      \item If $y_i > 0$ then call $S\algo{.push($v_i$)}$ and set
        $x_i = 1$.
      \end{steps}
    \item Let $\Sout \gets \emptyset$.
    \item While $S$ is not empty:
      \begin{steps}
      \item $v \gets S\algo{.pop()}$
      \item If $\Sout \cap N(v) = \emptyset$ then set
        $\Sout \gets \Sout + v$
      \end{steps}
    \item Output $\Sout$
    \end{algorithm}
  \end{Inset}
  \caption{Primal-dual for MWIS in inductively $k$-independent graphs.}
  \label{fig:primaldual-mwis}
\end{figure}

We now analyze the algorithm. Let $\Stackend$ be the set of vertices
in the stack $S$ at the end of the first phase, and let $\Sout \subseteq \Stackend$ be
the final output of the algorithm. The following observations are easy
to verify and we omit a formal proof.
\begin{itemize}
\item The algorithm constructs a feasible dual solution $y$.
\item A vertex $v_i \in \Stackend$ (that is $x_i = 1$) during the
  first phase iff $y_i > 0$. Further, if $v_i \in \Stackend$ then the
  dual constraint for $v_i$ is tight:
  $y_i + \sum_{j< i, v_i \in A_j} y_j = w_i$.  In other words the
  algorithm maintains primal complementary slackness condition for
  $x,y$.
\item $\Sout$ is a maximal independent set in $G[\Stackend]$.
\end{itemize}

We note that the dual solution $y$ is the same as the adjusted weights
created by  the stack based algorithm in \cite{YeB12}.
The key claim is the  following:
\begin{lemma}
  \labellemma{weights>duals}
  Let $\Sout$ be the set of vertices output by the algorithm. Then
  $w(\Sout) \ge \sum_{i} y_i$ where $y$ is the dual constructed by the algorithm.
\end{lemma}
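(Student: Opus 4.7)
\medskip\noindent
The plan is to exploit primal complementary slackness together with the LIFO structure of the second phase. Write $\Stackend$ for the stack at the end of phase one, so $\Sout \subseteq \Stackend$, and note that $y_i = 0$ for $v_i \notin \Stackend$, so that $\sum_i y_i = \sum_{v_i \in \Stackend} y_i$.

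\medskip\noindent
First I would invoke the tightness observation already made in the excerpt: whenever $v_i$ is pushed in phase one we have $y_i > 0$ and the dual constraint is tight, i.e.\
\begin{align*}
  w_i = y_i + \sum_{v_j \in \beforeneighbors{i}} y_j
  \qquad \text{for every } v_i \in \Stackend.
\end{align*}
Since $\Sout \subseteq \Stackend$, summing this identity over $v_i \in \Sout$ gives
\begin{align*}
  w(\Sout)
  = \sum_{v_i \in \Sout} y_i \;+\; \sum_{v_i \in \Sout} \sum_{v_j \in \beforeneighbors{i}} y_j.
\end{align*}
Comparing with $\sum_{v_i \in \Stackend} y_i$, it suffices to show
\begin{align*}
  \sum_{v_i \in \Sout} \sum_{v_j \in \beforeneighbors{i}} y_j
  \;\ge\; \sum_{v_i \in \Stackend \setminus \Sout} y_i,
\end{align*}
because the $y$-values are non-negative and any $v_j \in \beforeneighbors{i}$ with $v_j \notin \Stackend$ contributes $0$.

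\medskip\noindent
The main (and essentially only) step is then a charging argument that assigns each discarded vertex of the stack to a vertex in $\Sout$. Take any $v_i \in \Stackend \setminus \Sout$. The second phase popped $v_i$ off the stack but did not add it to $\Sout$, so at that moment there was already some $v_k \in \Sout$ with $v_k \in N(v_i)$. Because the stack is LIFO and phase one pushed vertices in the order $v_1, \ldots, v_n$, any vertex still on the stack when $v_i$ is popped must have been pushed \emph{after} $v_i$; in particular $k > i$, which means $v_i \in \beforeneighbors{k}$. Since $\Sout$ is independent no $y_j$ with $v_j \in \Sout$ is double-counted against $\Sout$'s own $y$-values, and the map $v_i \mapsto v_k$ shows that every $y_i$ with $v_i \in \Stackend \setminus \Sout$ appears at least once in the double sum on the left. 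Combining this with the tight-constraint identity proves $w(\Sout) \ge \sum_i y_i$.

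\medskip\noindent
The only subtle point is the charging step, and its correctness hinges entirely on the reverse-order (LIFO) processing in phase two; if phase two instead scanned $\Stackend$ in forward order the neighbor witnessing removal could have smaller index and the bound would fail. Once the LIFO observation is in hand the rest is bookkeeping.
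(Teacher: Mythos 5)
Your proof is correct and is essentially the paper's own argument: both start from the tight dual constraint $w_i = y_i + \sum_{v_j \in \beforeneighbors{i}} y_j$ for each pushed $v_i$, sum it over $v_i \in \Sout$, and use the LIFO structure of phase two to argue that every discarded $v_{i'} \in \Stackend \setminus \Sout$ is blocked by some $v_k \in \Sout$ with $k > i'$, so $y_{i'}$ is covered by the double sum and non-negativity finishes the bound. One small phrasing slip worth correcting: when $v_i$ is popped, the blocking vertex $v_k \in \Sout$ is not ``still on the stack'' (it is already in $\Sout$, hence already popped); rather, $v_k$ was popped \emph{earlier} than $v_i$, which in a LIFO stack means it was pushed \emph{later}, and that is why $k>i$ — your conclusion is right, but the sentence justifying it should be flipped.
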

\begin{proof}
  We observed that $x_i = 1$ implies that $y_i + \sum_{j < i, v_i \in
    A_j} y_j = w_i$.
  Hence, $w(\Sout) = \sum_{v_i \in \Sout} (y_i + \sum_{j<i, v_i \in
    A_j} y_j)$.
  Suppose $v_{i'} \in \Stackend \setminus \Sout$, then $v_{i'}$ was
  considered in the second phase but was not included since
  there was some $i > i'$ such that $v_i \in \Sout$ and
  $v_i \in A_{i'}$. This implies that $y_{i'}$ is counted in the
  term $\sum_{j < i, v_i \in A_{j}} y_j$. Thus, for every $v_j \in
  \Stackend$, the dual variable $y_j$ is included
  in the sum, and hence,
  $$w(\Sout) = \sum_{v_i \in \Sout} (y_i + \sum_{j<i, v_i \in
    A_j} y_j) \ge \sum_{i \in \Stackend} y_i = \sum_i y_i$$ where the last
  equality follows from the fact that $v_i \in \Stackend$ iff $y_i > 0$.
\end{proof}

\medskip
Thus the algorithm outputs a feasible independent set
$\Sout$ such that $w(\Sout) \ge \sum_i y_i$. But note that dual value
$k \sum_i y_i$ is an upper bound on the optimum LP value $\opt_{LP}$
since $y$ is feasible. Thus $w(\Sout) \ge \frac{1}{k} \opt_{LP}$
which proves that the algorithm yields a $\frac{1}{k}$ approximation
with respect to the LP relaxation $Q_G$  (in fact the relaxation that
drops
the constraints $x_i \le 1, i \in [n]$).

\section{Analysis of preemptive greedy}
\labelsection{preemptive-greedy-analysis}

We follow the notation of
\cite{ChekuriGQ15}.  Let $\Sout$ be the final set of vertices returned by
\algo{preemptive-greedy}. It is easy to see that the algorithm returns
an independent set. For each  $u \in \vertices$
let $\schedulebefore{u}$ denote the set of vertices in $S$ just before
$u$ is processed, and let $\scheduleafter{u}$ denote the set
after $u$ is processed. Thus a vertex $u$ is added to $S$
iff $\scheduleafter{u} \setminus \schedulebefore{u} = \setof{u}$. Let
\begin{math}
  U = \bigcup_{u \in \vertices} \scheduleafter{u}
\end{math}
be the set of all vertices that were ever (even momentarily) added to
$S$. Alternatively, $\vertices \setminus U$ is the set of vertices
that are discarded by the algorithm when it considers them.  For each
vertex $u$, let
$\gain{u} \defeq f(\scheduleafter{u}) - f\parof{\schedulebefore{u}}$
be the value added to $S$ from processing $u$. We have $\gain{u}= 0$
for all $u \notin U$, and
\begin{math}
  f(\Sout) = \sum_{u \in \vertices} \gain{u} = \sum_{u \in U}
  \gain{u}.
\end{math}

Let $T \subseteq \vertices$ be an independent set in the given graph,
in particular an optimum set. We would like to compare $f(\Sout)$ with
$f(T)$. Directly comparing $T$ with $\Sout$ is difficult since $\Sout$
is obtained by deleting vertices in $S$ along the way; thus a vertex
$v \in T \setminus \Sout$ may have been discarded due to a vertex
$u \in S$ when $v$ was considered but $u$ may not be in $\Sout$.
Thus, the analysis is broken into two parts that detour through
$U$. First, we relate the value of $f(\Sout)$ to the value of
$f(U)$. This part of the analysis bounds the amount of value lost by
kicking out vertices from $S$ during the exchanges. We then relate
$f(U)$ and $f(T)$; this is easier because any vertex in $T$ is always
compared against \emph{some} subset of vertices in $U$. Chaining the
inequalities from $f(\Sout)$ to $f(U)$ to $f(T)$ gives the final
approximation ratio.

\paragraph{Relating $f(\Sout)$ to $f(U)$:} The analysis is similar to
that in \cite{ChekuriGQ15}. We provide proofs for the sake of
completeness.  The following claim is easy to see since elements
before $s$ can only be deleted from $S$ as the algorithm proceeds.
\begin{claim}
  Over the course of the algorithm, the incremental value
  $\incv{f}{S}{s}$ of an element $s \in S$ is nondecreasing.
\end{claim}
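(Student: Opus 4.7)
The plan is to prove the claim by tracking how the set $S' = \{t \in S : t < s\}$ evolves over the course of the algorithm while $s$ remains in $S$, and then invoking submodularity to conclude that the marginal $\incv{f}{S}{s} = f_{S'}(s)$ is nondecreasing.

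First I would fix some $s = v_j$ at a moment when it has just been added to $S$, and consider any later time at which $s$ is still in $S$. The key structural observation is that the algorithm processes vertices in the fixed ordering $v_1,\ldots,v_n$, so any vertex added to $S$ strictly after $s$ has index $> j$ and therefore cannot belong to $S'$. Hence, between the time $s$ enters $S$ and any later time at which $s \in S$, the set $S'$ can only lose elements (via some $v_\ell$ with $\ell > j$ kicking out a neighbor of smaller index than $j$ in Step~\ref{preemptive-greedy-augment}); it cannot gain any.

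Letting $S'_1 \supseteq S'_2$ denote the two snapshots of $S'$ at an earlier and later time, respectively, the decreasing-marginals property of submodular $f$ yields
\begin{align*}
\incv{f}{S_1}{s} = f(S'_1 + s) - f(S'_1) \leq f(S'_2 + s) - f(S'_2) = \incv{f}{S_2}{s},
\end{align*}
which is exactly the desired monotonicity.

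There is really no substantive obstacle here: the only subtlety worth stating carefully is the ordering argument explaining why no smaller-indexed vertex can be inserted into $S'$ after $s$ is placed, so that every change in $S'$ must be a deletion. Once that structural fact is isolated, the claim is an immediate one-line consequence of submodularity.
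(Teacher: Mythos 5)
Your proof is correct and follows the same reasoning the paper itself indicates (the paper only notes that "elements before $s$ can only be deleted from $S$ as the algorithm proceeds" and leaves the rest implicit). You spell out precisely why $S' = \{t \in S : t < s\}$ can only shrink once $s$ is added (later insertions have larger index and so never enter $S'$) and then apply decreasing marginals, which is exactly the intended argument.
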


For a vertex $u \in U \setminus \Sout$ we let $u'$ denote the vertex
that caused $u$ to be removed from $S$. And we let $\exitvalue{u}$ denote its
incremental value just before it is removed. Therefore,
$\chi(u) = \incv{f}{\setbefore{u'}}{u}$.

\begin{lemma}
  \labellemma{deltavalue}
  Let $u \in U$ then $\delta_u \ge \beta \sum_{c \in C_u} \incv{f}{\setbefore{u}}{c}$.
\end{lemma}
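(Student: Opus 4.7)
The claim quantifies how much $f$-value the algorithm gains by admitting $u$ versus how much it must ``pay'' for evicting the conflict set $C_u$. Write $S' = \setbefore{u}$. Since $u \in U$, at the moment $u$ is processed the acceptance test held, so
\begin{equation*}
  f_{S'}(u) \;\ge\; (1+\beta)\sum_{c \in C_u} \incv{f}{S'}{c},
\end{equation*}
and $\scheduleafter{u} = (S'\setminus C_u) + u$, whence
\begin{equation*}
  \gain{u} \;=\; f\bigl((S'\setminus C_u)+u\bigr) - f(S')
  \;=\; f_{S'\setminus C_u}(u) \;-\; \bigl(f(S') - f(S'\setminus C_u)\bigr).
\end{equation*}

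The plan is to lower-bound the first term and upper-bound the second term by the same quantity $\sum_{c\in C_u}\incv{f}{S'}{c}$, so that the algorithm's inequality directly yields a factor-$\beta$ surplus. For the first term, since $S' \setminus C_u \subseteq S'$ and $u$ is outside both, submodularity gives $f_{S'\setminus C_u}(u) \ge f_{S'}(u)$. Combined with the acceptance test, this term alone is at least $(1+\beta)\sum_{c\in C_u}\incv{f}{S'}{c}$, so it suffices to show
\begin{equation*}
  f(S') - f(S'\setminus C_u) \;\le\; \sum_{c\in C_u} \incv{f}{S'}{c}.
\end{equation*}

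This is the only nontrivial step. I will list $C_u = \{c_1 < c_2 < \cdots < c_r\}$ in the implicit order on $\groundset$, and add these elements back to $S'\setminus C_u$ one at a time in this order, telescoping:
\begin{equation*}
  f(S') - f(S'\setminus C_u) \;=\; \sum_{i=1}^{r} f_{(S'\setminus C_u)\cup\{c_1,\ldots,c_{i-1}\}}(c_i).
\end{equation*}
For each $i$, the set $\{s \in S' : s < c_i\}$ is contained in $(S'\setminus C_u)\cup\{c_1,\ldots,c_{i-1}\}$ (the right-hand side contains every earlier element of $S'$ not in $C_u$, plus exactly the $C_u$-elements preceding $c_i$). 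Submodularity (the second property of the incremental-value lemma applied to nested sets) therefore bounds the $i$-th summand by $\incv{f}{S'}{c_i}$, yielding the desired inequality.

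Substituting both bounds into the decomposition of $\gain{u}$ gives
\begin{equation*}
  \gain{u} \;\ge\; (1+\beta)\sum_{c\in C_u}\incv{f}{S'}{c} - \sum_{c\in C_u}\incv{f}{S'}{c} \;=\; \beta\sum_{c\in C_u}\incv{f}{S'}{c},
\end{equation*}
which is the claim. The only subtle point is matching the ``add-back'' ordering on $C_u$ with the incremental-value ordering used in the algorithm's test; everything else is a clean application of submodularity and the acceptance inequality.
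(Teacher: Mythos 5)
Your proof is correct and follows essentially the same structure as the paper's: decompose $\gain{u}$ as $f_{S'\setminus C_u}(u) - (f(S') - f(S'\setminus C_u))$, lower-bound the first term via submodularity plus the acceptance test, and upper-bound $f(S') - f(S'\setminus C_u)$ by $\sum_{c\in C_u}\incv{f}{S'}{c}$ using submodularity and the definition of incremental value. The only difference is that you spell out the telescoping/set-containment argument for the last inequality, which the paper compresses into the phrase ``by submodularity and defn of $\nu$''.
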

\begin{proof}
  Since the vertex $u$ was added to $S$ when it was considered, we
  have $\delta_u = f(\setafter{u}) - f(\setbefore{u})$ where
  $\setafter{u} = \setbefore{u} - C_u + u$.
  The vertex $u$ was added by the algorithm since $f_S(u) \ge (1+\beta) \sum_{c
    \in C_u}\incv{f}{S}{c}$ where $S = \setbefore{u}$. Therefore
  $\beta \sum_{c \in C_u} \incv{f}{\setbefore{u}}{c} \le f_{\setbefore{u}}(u) - \sum_{c \in C_u} \incv{f}{\setbefore{u}}{c}$.
  It suffices to prove that $f(\setafter{u}) -
  f(\setbefore{u}) \ge f_S(u) - \sum_{c \in C_u}\incv{f}{S}{c}$ which we do below.
  For notational convenience let $A = \setbefore{u} - C_u$.
  \begin{align*}
    f(\setafter{u}) -  f(\setbefore{u}) & = f(A + u) - f(\setbefore{u}) \\
                                        & = f_A(u) + f(A) - f(\setbefore{u}) \\
                                        & \ge f_{\setbefore{u}}(u) - (f(\setbefore{u}) - f(A)) & \text{by submodularity since $A \subseteq \setbefore{u}$}\\
                                        & \ge f_{\setbefore{u}}(u) - \sum_{c \in C_u} \incv{f}{\setbefore{u}}{c} & \text{by submodularity and defn of $\nu$.}
  \end{align*}
\end{proof}

\begin{lemma}
  \labellemma{exit-values-upper-bound}
  $\sum_{u \in U\setminus \Sout} \exitvalue{u} \leq  \beta^{-1} f(\Sout)$.
\end{lemma}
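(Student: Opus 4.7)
The plan is to bound the total exit value by a double-counting argument that pairs each displaced vertex with the vertex that displaced it, then applies \reflemma{deltavalue} and telescoping.

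First, I would rewrite the sum in terms of the displacing vertices. For each $u \in U \setminus \Sout$ there is, by construction, a unique $u' \in U$ such that $u \in C_{u'}$ and $u$ was kicked out of $S$ exactly when $u'$ was pushed. Conversely, for each $u' \in U$, the set of vertices removed from $S$ at the moment $u'$ is pushed is precisely $C_{u'} \subseteq \schedulebefore{u'}$. This gives a partition of $U \setminus \Sout$ by displacer, and by the definition of $\exitvalue{u}$ I can rewrite
\begin{align*}
  \sum_{u \in U \setminus \Sout} \exitvalue{u}
  = \sum_{u' \in U} \sum_{u \in C_{u'}} \incv{f}{\schedulebefore{u'}}{u}.
\end{align*}

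Next, I would apply \reflemma{deltavalue} to each inner sum. That lemma says exactly that for any $u' \in U$,
\begin{align*}
  \sum_{u \in C_{u'}} \incv{f}{\schedulebefore{u'}}{u}
  \le \beta^{-1} \gain{u'}.
\end{align*}
Substituting into the double sum yields
\begin{math}
  \sum_{u \in U \setminus \Sout} \exitvalue{u}
  \le \beta^{-1} \sum_{u' \in U} \gain{u'}.
\end{math}

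Finally, I would invoke the telescoping identity already noted in the preamble to this analysis: since $\gain{u} = f(\scheduleafter{u}) - f(\schedulebefore{u})$ for $u \in U$ and $\gain{u} = 0$ otherwise, and the sets $\schedulebefore{\cdot}, \scheduleafter{\cdot}$ chain together through the execution starting from $\emptyset$ and ending at $\Sout$, we have $\sum_{u' \in U} \gain{u'} = f(\Sout) - f(\emptyset) = f(\Sout)$. Combining the two bounds gives the claim. There is no real obstacle here; the only thing to double-check is that the pairing $u \mapsto u'$ is well defined and surjective onto the sets $C_{u'}$, which is immediate from the algorithm's only deletion rule.
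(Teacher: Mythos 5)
Your proof is correct and follows essentially the same route as the paper: partition $U \setminus \Sout$ by the displacing vertex, identify each exit value with the incremental value $\incv{f}{\schedulebefore{u'}}{u}$ at the moment of displacement, apply \reflemma{deltavalue} to each inner sum over $C_{u'}$, and telescope the gains $\gain{u'}$ to $f(\Sout)$. The paper's displayed derivation is the same chain of inequalities, so there is nothing to add.
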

\begin{proof}
  Indeed,
  \begin{align*}
    \sum_{u \in U \setminus \Sout} \exitvalue{u}
    & = \sum_{u \in U} \sum_{c \in C_u} \exitvalue{c} & \text{since $\{C_u : u \in U\}$ partitions $U \setminus \Sout$} \\
    & \le \sum_{u \in U} \frac{1}{\beta} \sum_{u \in U} \delta_u & \text{from \reflemma{deltavalue}}\\
    & =  \frac{1}{\beta} f(\Sout).
  \end{align*}
\end{proof}

The next lemma shows that $f(U)$ is not much larger than $f(\Sout)$.
\begin{lemma}
  \labellemma{all-taken-elements-upper-bound}
  $f(U) \leq \parof{1 + \beta^{-1}} f(\Sout)$.
\end{lemma}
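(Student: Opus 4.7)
The plan is to write $f(U)$ as a telescoping sum of increments along the processing order, upper bound each increment using submodularity, and then decompose the resulting expression into (i) the genuine gain $\gain{u}$ contributed to $\Sout$ and (ii) the incremental values of the vertices in the conflict sets $C_u$, which are precisely the exit values accounted for by \reflemma{exit-values-upper-bound}.

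More concretely, order $U$ by the time at which each vertex is processed, and for $u \in U$ let $U_{<u}$ denote the vertices of $U$ processed strictly before $u$. The first step is the identity
\begin{align*}
  f(U) = \sum_{u \in U} f_{U_{<u}}(u).
\end{align*}
Since $\schedulebefore{u} \subseteq U_{<u}$ (the stack at time $u$ is contained in the set of all previously-added vertices), submodularity gives $f_{U_{<u}}(u) \le f_{\schedulebefore{u}}(u)$. The second step is to apply submodularity again, using $\schedulebefore{u} \setminus C_u \subseteq \schedulebefore{u}$, to obtain
\begin{align*}
  f_{\schedulebefore{u}}(u)
  \le
  f\parof{\scheduleafter{u}} - f\parof{\schedulebefore{u} \setminus C_u}
  = \gain{u} + \bigl(f(\schedulebefore{u}) - f(\schedulebefore{u} \setminus C_u)\bigr).
\end{align*}
The bracketed difference can then be bounded by $\sum_{c \in C_u} \incv{f}{\schedulebefore{u}}{c}$ using properties (i) and (ii) of incremental values (removing the vertices of $C_u$ from $\schedulebefore{u}$ one at a time, in processing order, each step's marginal is at least $\incv{f}{\schedulebefore{u}}{c}$).

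The final step is to sum over $u \in U$. The $\gain{u}$ terms telescope: $\sum_{u \in U} \gain{u} = f(\Sout)$. For the conflict-set contributions, observe that the family $\{C_u : u \in U\}$ partitions $U \setminus \Sout$, and for each $c \in C_u$ the quantity $\incv{f}{\schedulebefore{u}}{c}$ is exactly $\exitvalue{c}$ as defined just before \reflemma{deltavalue}. Hence
\begin{align*}
  \sum_{u \in U} \sum_{c \in C_u} \incv{f}{\schedulebefore{u}}{c}
  = \sum_{c \in U \setminus \Sout} \exitvalue{c}
  \le \beta^{-1} f(\Sout),
\end{align*}
where the inequality is \reflemma{exit-values-upper-bound}. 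Combining yields $f(U) \le (1 + \beta^{-1}) f(\Sout)$.

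The main obstacle is the bookkeeping around incremental values: one must be careful that $\schedulebefore{u} \subseteq U_{<u}$ so the first submodularity step is valid, and that the decomposition of $f(\schedulebefore{u}) - f(\schedulebefore{u} \setminus C_u)$ into $\sum_{c \in C_u} \incv{f}{\schedulebefore{u}}{c}$ uses the processing order as the underlying order (so that the $\incv{f}{\schedulebefore{u}}{c}$ notation matches the algorithm's notion of $\exitvalue{c}$). Once the orderings are aligned, everything reduces to the two applications of submodularity and the already-proved \reflemma{exit-values-upper-bound}.
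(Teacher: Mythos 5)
Your argument is correct and reaches the same conclusion, but it takes a somewhat different decomposition than the paper. The paper writes $f(U) = f(\Sout) + f_{\Sout}(U \setminus \Sout)$, orders $U' = U \setminus \Sout = \{v_{i_1},\dots,v_{i_h}\}$ by processing time, telescopes $f_{\Sout}(U') = \sum_j f_{\Sout \cup U'_{j-1}}(v_{i_j})$, and bounds each term directly by the exit value $\exitvalue{v_{i_j}}$ using submodularity (since $\Sout \cup U'_{j-1}$ contains the set defining $\exitvalue{v_{i_j}}$). You instead telescope all of $f(U)$ as $\sum_{u \in U} f_{U_{<u}}(u)$, push each increment down to $f_{\schedulebefore{u}}(u)$, split that into $\gain{u}$ plus the conflict-set marginal $f(\schedulebefore{u}) - f(\schedulebefore{u}\setminus C_u)$, and then reassemble $f(\Sout) = \sum_u \gain{u}$ at the end. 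Both routes land on \reflemma{exit-values-upper-bound}; the paper's is more economical because it only has to bound the part of $f(U)$ beyond $f(\Sout)$, whereas you re-derive a bound on each increment and recombine, effectively re-proving the content of \reflemma{deltavalue} inline. That said, your version has the virtue of making the identification $\incv{f}{\schedulebefore{u}}{c} = \exitvalue{c}$ completely explicit, which the paper leaves implicit.

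One small inaccuracy in the justification of the step $f(\schedulebefore{u}) - f(\schedulebefore{u}\setminus C_u) \le \sum_{c \in C_u}\incv{f}{\schedulebefore{u}}{c}$: the parenthetical says each one-at-a-time removal ``in processing order'' has marginal ``at least'' $\incv{f}{\schedulebefore{u}}{c}$, which is both the wrong order and the wrong direction. Removing in \emph{reverse} processing order gives marginals \emph{at most} $\incv{f}{\schedulebefore{u}}{c}$; or, more cleanly, use property (i) to write both $f(\schedulebefore{u})$ and $f(\schedulebefore{u}\setminus C_u)$ as sums of incremental values and property (ii) to compare the shared terms. The conclusion you state is correct; only the stated mechanism is off.
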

\begin{proof}
  Let $U' = U \setminus \Sout$ and let $U' = \{v_{i_1}, \ldots,
  v_{i_h}\}$ where $i_1 < i_2 \ldots < i_h$. We have $f(U)= f(\Sout) +
  f_{\Sout}(U')$.  It suffices to upper bound $f_{\Sout}(U')$ by
  $f(\Sout)/\beta$.  For $1\le j \le h$ let $U'_j =
  \{v_{i_1},\ldots,v_{i_j}\}$.  We have $f_{\Sout}(U') = \sum_{j=1}^h
  f_{\Sout \cup U'_{j-1}}(v_{i_j})$.  We claim that $f_{\Sout \cup
    U'_{j-1}}(v_{i_j}) \le \exitvalue{v_{i_j}}$.  This follows by
  submodularity and the fact that $\Sout \cup U'_{j-1}$ is a superset
  of the vertices that are in $S$ when $v_{i_j}$ is deleted.
  Putting things together,
  \begin{align*}
    f_{\Sout}(U') = \sum_{j=1}^h f_{\Sout \cup U'_{j-1}}(v_{i_j}) \le
    \sum_{u \in U'} \exitvalue{u} \le \frac{1}{\beta}f(\Sout)
  \end{align*}
  where the last inequality follows from \reflemma{exit-values-upper-bound}.
\end{proof}

\paragraph{Relating $\opt$ to $f(U)$:} It remains to bound $f(T)$ (for some competing set $T$) to
$f(U)$ and hence to $f(\Sout)$. The critical question, addressed in the
following lemmas, is how to charge the value of elements in $T$ off to
elements in $U$.

\begin{lemma}
  \labellemma{atmost-k}
  Let $T \subseteq \vertices$ be an independent set disjoint from
  $U$. Each element $u \in U$ appears in the conflict list $C_t$ for
  at most $k$ vertices  $t \in T$.
\end{lemma}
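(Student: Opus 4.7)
The plan is to fix an arbitrary $u \in U$ and show that the set
\begin{math}
  T_u \defeq \setof{t \in T : u \in C_t}
\end{math}
has size at most $k$ by identifying $T_u$ with an independent set inside the later-neighborhood of $u$, then invoking the $k$-inductive independence of the ordering.

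First I would unpack what it means for $u$ to lie in $C_t$ for some $t \in T$. By definition of the conflict list, $u \in C_t$ requires both $u \in N(t)$ and $u \in \setbefore{t}$ (i.e., $u$ was sitting in $S$ at the moment $t$ was considered). In particular, $u$ must already have been pushed onto $S$ at some earlier step, so $u$ precedes $t$ in the elimination ordering $v_1,\dots,v_n$. Thus, writing $u = v_i$, every $t \in T_u$ satisfies $t \in N(v_i) \cap \setof{v_{i+1},\dots,v_n}$.

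Next I would use that $T$ is an independent set of $\graph$. Its subset $T_u$ is therefore an independent set contained entirely in the induced subgraph $\graph[N(v_i) \cap \setof{v_{i+1},\dots,v_n}]$. By the definition of a $k$-independence ordering,
\begin{math}
  \alpha\parof{\graph[N(v_i) \cap \setof{v_{i+1},\dots,v_n}]} \leq k,
\end{math}
and hence $\sizeof{T_u} \leq k$, which is exactly the claim.

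I don't anticipate a real obstacle here; the only subtlety is checking that the ``present-in-$S$'' condition $u \in \setbefore{t}$ forces the ordering $i < $ index of $t$, so that the $k$-inductive independence can be applied to the correct slice of the neighborhood. The hypothesis $T \cap U = \emptyset$ is used only implicitly to guarantee that $C_t$ is well-defined as the conflict list at the moment $t$ was processed (and rejected or otherwise handled) — it plays no further role in the counting argument.
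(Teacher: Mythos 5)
Your proof is correct and takes essentially the same route as the paper: you fix $u \in U$, observe that every $t \in T$ with $u \in C_t$ must be a neighbor of $u$ that comes strictly later in the ordering (because $u$ was already on the stack when $t$ arrived), so the set of such $t$ is an independent subset of $G[N(u) \cap \{v : v > u\}]$, and then you invoke $k$-inductive independence. The paper's proof is a one-liner with the same content (it phrases the key set as $T \cap N(u) \cap \{v : v > u\}$ and says it is ``precisely'' the set of $t$ with $u \in C_t$); your version is a touch more careful in only asserting the containment direction that is actually needed.
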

\begin{proof}
  Fix $u \in U$.  The set $T \cap N(u) \cap \setof{v: v > u}$ consists
  of precisely the vertices $t \in T$ for which $u \in C_t$. As a
  subset of $T$, this set is certainly independent. By definition of
  $k$-inductive independence, the cardinality of this set is at most
  $k$.
\end{proof}

\begin{lemma}
  \labellemma{margin-against-all-taken}
  Let $T \subseteq \vertices$ be an independent set. Then
  \begin{align*}
    f_U(T) \leq k (1 + \beta) (1 + \beta^{-1}) f(\Sout).
  \end{align*}
\end{lemma}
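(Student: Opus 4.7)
The plan is to reduce to the disjoint case $T \cap U = \emptyset$ (which is without loss of generality, since $f_U(T) = f_U(T \setminus U)$), apply the algorithm's rejection inequality summed over $t \in T$, and then bound the resulting double sum by $k f(U)$, so that the factor $f(U) \le (1+\beta^{-1}) f(\Sout)$ supplied by \reflemma{all-taken-elements-upper-bound} completes the proof.

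Concretely, I would chain $f_U(T) \le \sum_{t \in T} f_U(t) \le \sum_{t \in T} f_{\setbefore{t}}(t)$ by submodularity together with $\setbefore{t} \subseteq U$ at every moment of the execution. After the WLOG reduction each $t \in T$ satisfies $t \notin U$, so the rejection criterion of \algo{preemptive-greedy} gives $f_{\setbefore{t}}(t) < (1+\beta) \sum_{c \in C_t} \incv{f}{\setbefore{t}}{c}$. Substituting and exchanging the order of summation yields
\begin{align*}
f_U(T) < (1+\beta) \sum_{c \in U} \sum_{t \in T : c \in C_t} \incv{f}{\setbefore{t}}{c}.
\end{align*}
By \reflemma{atmost-k}, for each fixed $c \in U$ the inner sum has at most $k$ terms, because its index set is contained in $T \cap N(c) \cap \{v : v > c\}$.

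The main obstacle is the remaining estimate
\begin{math}
\sum_{c \in U} \sum_{t \in T : c \in C_t} \incv{f}{\setbefore{t}}{c} \le k\, f(U).
\end{math}
A term-by-term comparison $\incv{f}{\setbefore{t}}{c} \le \incv{f}{U}{c}$ actually fails: since the pre-$c$ slice $\{s \in \setbefore{t} : s < c\}$ is a subset of the pre-$c$ slice of $U$, submodularity forces the incremental value computed against the smaller stack to be the larger one. The bound has to be recovered in aggregate: the excess of $\incv{f}{\setbefore{t}}{c}$ over $\incv{f}{U}{c}$ is caused by elements of $U$ less than $c$ that have been removed from the stack by time $t$, and each such discarded $c'$ still contributes $\incv{f}{U}{c'}$ to $f(U)$. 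The plan is to track the sequence of stack removals and to telescope those unused marginals against the excesses, so that after summing over $c$ each incremental value $\incv{f}{U}{\cdot}$ is charged at most $k$ times.

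Finally, combining the aggregate bound with \reflemma{all-taken-elements-upper-bound},
\begin{math}
f_U(T) < (1+\beta) \cdot k \cdot f(U) \le k(1+\beta)(1+\beta^{-1}) f(\Sout),
\end{math}
which is the desired inequality.
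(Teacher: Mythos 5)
Your setup is right through the point where you exchange the order of summation: the WLOG reduction to disjoint $T$, the chain $f_U(T)\le\sum_t f_{\setbefore{t}}(t)$, the use of the rejection criterion, and the invocation of \reflemma{atmost-k} all match the paper. You also correctly observe that a term-by-term comparison $\incv{f}{\setbefore{t}}{c}\le\incv{f}{U}{c}$ fails, and for exactly the reason you give: the pre-$c$ slice of $\setbefore{t}$ is \emph{smaller} than the pre-$c$ slice of $U$, so submodularity pushes the inequality the wrong way.

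But the "telescope against $f(U)$" plan is chasing the wrong intermediate target, and it is not clear it can be made to work. Indeed \reflemma{all-taken-elements-upper-bound} shows $f(U)\le f(\Sout)+\sum_{u\in U\setminus\Sout}\exitvalue{u}$, so $f(U)$ sits \emph{below} the natural quantity your double sum wants to be charged to; there is no reason to expect $\sum_{c}\sum_{t:c\in C_t}\incv{f}{\setbefore{t}}{c}\le k\,f(U)$. The paper sidesteps this entirely by choosing a different comparison point for each $c$: instead of the $U$-incremental value, use the \emph{later stack-based} incremental value. By the Claim that $\incv{f}{S}{c}$ is nondecreasing over the run (elements below $c$ can only be deleted from the stack, so the pre-$c$ slice only shrinks), one gets $\incv{f}{\setbefore{t}}{c}\le\incv{f}{\Sout}{c}$ if $c$ survives to $\Sout$, and $\incv{f}{\setbefore{t}}{c}\le\exitvalue{c}$ if $c$ is later evicted — both inequalities go the right way because $\Sout$ (resp.\ the stack at $c$'s eviction time) is a later, smaller stack than $\setbefore{t}$. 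Summing with the factor $k$ from \reflemma{atmost-k} gives
\begin{align*}
  f_U(T) \le (1+\beta)\, k\Bigl(\sum_{u\in\Sout}\incv{f}{\Sout}{u}+\sum_{u\in U\setminus\Sout}\exitvalue{u}\Bigr)
  = (1+\beta)\,k\Bigl(f(\Sout)+\sum_{u\in U\setminus\Sout}\exitvalue{u}\Bigr),
\end{align*}
and \reflemma{exit-values-upper-bound} finishes directly, with no detour through $f(U)$. The missing idea, in short, is to charge each $\incv{f}{\setbefore{t}}{c}$ to the final or exit incremental value of $c$ on the stack, not to its incremental value against the set $U$ of all elements ever accepted.
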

\begin{proof}
  Since $f_U(T) = f_U(T \setminus U)$, it suffices to assume that $T$
  is disjoint from $U$.  For each vertex $t \in T$, since $t$ is not
  in $U$, we have
  \begin{math}
    f_{\setbefore{t}}(t) %
    \leq %
    (1+\beta) \sum_{c \in C_t} \incv{f}{\setbefore{t}}{c}.
  \end{math}
  Fix a vertex $u \in C_t$. If $u \in \Sout$, then  $u$ is in the final output;
  then we have
  \begin{math}
    \incv{f}{\setbefore{t}}{u} %
    \leq \incv{f}{\Sout}{u}
  \end{math}
  because the incremental value of an element in $S$ is
  nondecreasing. If $u \notin \Sout$, and $u$ was deleted to make room
  for some later element $u'$, then we have
  \begin{math}
    \incv{f}{\setbefore{t}}{u} %
    \leq %
    \exitvalue{u}
  \end{math}
  again because incremental values are nondecreasing.

  By the preceding lemma, each element $u \in U$ appears in $C_t$ for
  at most $k$ choices of $t$. Therefore, in sum, we have
  \begin{align*}
    f_U(T)                              %
    &\leq
      \sum_{t \in T} f_{\setbefore{t}}(t) %
    &\text{by submodularity,}\\
    &\leq                                %
      (1+\beta) \sum_{t \in T} \sum_{c \in C_t}
      \incvalue{f}{\setbefore{t}}{c} %
    &\text{since } t \notin U,
    \\
    &\leq
      k (1+\beta) \parof{\sum_{u \in \Sout} \incvalue{f}{\Sout}{u} + \sum_{u \in U \setminus \Sout} \exitvalue{u}} %
    &\text{\reflemma{atmost-k} and argument above,}
    \\
    & \leq
      k (1+\beta) \parof{f(\Sout) + \sum_{u \in U \setminus \Sout} \exitvalue{u}} %
    \\
    &\leq                       %
      k (1 + \beta) (1 + \beta^{-1}) f(\Sout) & \text{by  \reflemma{exit-values-upper-bound}}
  \end{align*}
  as desired.
\end{proof}

\medskip
From here, it is relatively straightforward to get a final
approximation bound.
\begin{theorem}
  Given an inductively $k$-independent graph with a $k$-inductive
  ordering, the algorithm \algo{preemptive-greedy} returns an
  independent set $\Sout$ such that for any independent set $T$,
  \begin{align*}
    f(T) \leq (k(1 + \beta) + 1)(1 + \beta^{-1}) f(\Sout).
  \end{align*}
\end{theorem}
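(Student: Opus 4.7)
The plan is to combine the two technical lemmas already established, namely \reflemma{all-taken-elements-upper-bound} which bounds $f(U)$ in terms of $f(\Sout)$, and \reflemma{margin-against-all-taken} which bounds the marginal value $f_U(T)$ of the competing independent set $T$ on top of $U$. The only additional ingredient is monotonicity of $f$, which lets us pass from $f(T)$ to $f(T \cup U)$.

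First I would write $f(T) \le f(T \cup U)$ using monotonicity, and then decompose $f(T \cup U) = f(U) + f_U(T)$ by definition of the marginal. Next I would substitute the two bounds: $f(U) \le (1 + \beta^{-1}) f(\Sout)$ from \reflemma{all-taken-elements-upper-bound}, and $f_U(T) \le k(1+\beta)(1+\beta^{-1}) f(\Sout)$ from \reflemma{margin-against-all-taken}. Adding the two contributions gives
\begin{align*}
  f(T) \le \bigl(1 + \beta^{-1}\bigr) f(\Sout) + k(1+\beta)\bigl(1+\beta^{-1}\bigr) f(\Sout) = \bigl(k(1+\beta) + 1\bigr)\bigl(1 + \beta^{-1}\bigr) f(\Sout),
\end{align*}
which is precisely the claimed inequality.

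There is essentially no obstacle here since the heavy lifting has already been done in the two preceding lemmas; the only subtle point is the invocation of monotonicity, which is harmless because the theorem is stated for monotone submodular $f$. (The non-monotone case is handled separately by \algo{randomized-preemptive-greedy}, where the $1/2$-sampling step is what recovers a comparable bound in expectation.) I would therefore keep the write-up to two short sentences, chaining $f(T) \le f(T\cup U) = f(U) + f_U(T)$ with the two lemma bounds.
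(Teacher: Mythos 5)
Your proof is correct and follows the paper's own argument essentially verbatim: the paper writes the single chain $f(T) \leq f_U(T) + f(U) \leq (k(1+\beta)+1)(1+\beta^{-1}) f(\Sout)$, invoking the same two lemmas, and the inequality $f(T) \leq f_U(T) + f(U)$ is exactly the monotonicity step $f(T) \leq f(T \cup U)$ that you spell out. Your observation that this is the sole place monotonicity enters is also precisely the point the paper makes when setting up the non-monotone analysis.
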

\begin{proof}
  Let $T$ be an optimal solution. We have
  \begin{align*}
    f(T)                        %
    &\leq f_U(T) + f(U)    %
      \leq                 %
      (k (1 + \beta) + 1) (1 + \beta^{-1}) f(\Sout)
      \labelthisequation{preemptive-greedy-invoke-monotonicity}
  \end{align*}
  via \reflemma{margin-against-all-taken} and \reflemma{all-taken-elements-upper-bound}.
\end{proof}

\medskip
The bound is minimized by taking $\beta = \sqrt{1 + k^{-1}}$, which at
which point
\begin{align*}
  f(T) \leq (4 k + 2 + o(1)) f(\Sout),
\end{align*}
where the $o(1)$ goes to $0$ as $k$ increases. For $k = 1$, the
approximation ratio is $3 + 2 \sqrt{2}$.

\subsection{Randomized preemptive greedy for nonnegative functions}
Here we analyze the \algo{randomized-preemptive-greedy} for
non-negative submodular functions that may not be monotone.
A key observation is that the analysis of \algo{preemptive-greedy}
does not invoke the monotonicity of $f$ until the very end, in
equation \refequation{preemptive-greedy-invoke-monotonicity}. In
particular, \reflemma{margin-against-all-taken} and
\reflemma{all-taken-elements-upper-bound} hold for nonnegative
submodular functions.

\refequation{preemptive-greedy-invoke-monotonicity} invokes
monotonicity when it takes the inequality $\f{U \cup T} \geq
\f{T}$. Informally speaking, by injecting randomization, we will be
able recover a similar inequality, except losing a factor of $4$.

Fix a set $T$.  Let $\vertices'$ sample each element in $\vertices$
with probability $1/2$. Let $T' = T \cap \vertices'$. Conditional on
$\vertices'$, we have
\begin{align*}
  \f{U} \leq \parof{1 + \beta^{-1}} \f{\Sout}
\end{align*}
and
\begin{align*}
  \f_{U}{T'} \leq               %
  k \parof{1 + \beta} \parof{1 + \beta}^{-1} \f{\Sout}
\end{align*}
via \reflemma{all-taken-elements-upper-bound} and
\reflemma{margin-against-all-taken} respectively.

Now, conditional on $T'$, $U \setminus T = U \setminus T'$ is a
randomized set, where any vertex $v \in \vertices$ appears in
$U \setminus T$ with probability at most $1/2$.  By \reflemma{bfns-nonneg},
\begin{align*}
  \evof{\f{U \cup T'}}{T'} \geq \frac{1}{2} \f{T'}.
\end{align*}

We also have, via the concavity of $F$ along any non-negative
direction \cite{Vondrak-thesis},
\begin{align*}
  \evof{\f{T'}} = F(\frac12 \ones_T) \geq \frac{1}{2} F(\ones_T) = \frac{1}{2} \f(T)
\end{align*}
where $\ones_T$ is the indicator vector of $T$.

Altogether, we have
\begin{align*}
  \f{T}
  &\leq 2 \evof{\f{T'}}
  \leq
    4 \evof{\f{U \cup T'}}
    \\
  &=                             %
  4                             %
  \evof{\f{T'}_U + \f{U}}
  \leq                          %
  4 \parof{k \parof{1 + \beta} +1}\parof{1 + \beta}^{-1} \evof{\f{\Sout}},
\end{align*}
as desired.

\end{document}